\newcommand{\CC}{\mathbb C}
\newcommand{\RR}{\mathbb R}
\newcommand{\NN}{\mathbb N}
\newcommand{\FF}{\mathcal F}
\newcommand{\tsp}{^{\mbox{\scriptsize\sc t}}}
\newcommand{\ket}[1]{|#1\rangle}
\newtheorem{definition}{Definition}
\newtheorem{notation}[definition]{Notation}
\newtheorem{example}[definition]{Example}
\newtheorem{lemma}[definition]{Lemma}
\newtheorem{theorem}[definition]{Theorem}
\newtheorem{proposition}[definition]{Proposition}
\newtheorem{remark}[definition]{Remark}
\begin{document}
\title{Quantum entanglement and the Bell Matrix}
\thanks{PACS number: 03.67.Bg, 03.67.Mn}
\thanks{2010 Mathematics Subject Classification: 81P68, 81P40 }
\thanks{Partially supported by PRIN2011 Project ``Metodi Logici per il 
trattamento dell'informazione''}
\author[Lai, A. C.]{Anna Chiara Lai}
\address{Dipartimento di Matematica e Fisica,
          Universit\`a Roma Tre,
          Via della Vasca Navale 84,
          00146 Roma, Italy}
\email{aclai@mat.uniroma3.it}
\author[Pedicini, M.]{Marco Pedicini}
\address{Dipartimento di Matematica e Fisica,
          Universit\`a Roma Tre,
          Via della Vasca Navale 84,
          00146 Roma, Italy}      
\email{marco.pedicini@uniroma3.it}
\author[Rognone, S.]{Silvia Rognone}
\address{Dipartimento di Matematica e Fisica,
          Universit\`a Roma Tre,
          Via della Vasca Navale 84,
          00146 Roma, Italy}
\email{}

\keywords{Quantum entanglement, Entangled states, Multipartite entanglement, 
CNOT, Bell states}
\maketitle
\begin{abstract}
We present a class of maximally entangled states generated by a
high-dimensional generalisation of the \textsc{cnot} gate. The
advantage of our constructive approach is the simple algebraic
structure of both entangling operator and resulting entangled
states. In order to show that the method can be applied to any
dimension, we introduce new sufficient conditions for global and
maximal entanglement with respect to Meyer and Wallach's measure.
\end{abstract}

\section{Introduction}

Entanglement is a key feature of quantum mechanical systems with wide
applications to the field of quantum information theory.  The class of
quantum processes relying on entangled states include quantum state
teleportation \cite{teleportation}, quantum error correction
\cite{error}, quantum cryptography \cite{key}, and some quantum
computational speedups \cite{speedup}.  Multi-qubit entangled states
are regarded as a valuable resource for processing information: for
instance, several authors applied multi-qubit entanglement (and
related entangling procedures) to multi-agent generalizations of the
quantum teleportation protocol introduced in the paper by Bennett,
Brassard, Cr\`epeau, Jozsa, Peres, and Wootters \cite{teleportation}
-- see for instance \cite{tele1}.  Also, other classes of multi-qubit
entangled states turned out to be suitable for superdense coding.

Applications to quantum information theory motivated the search for
the mathematical characterisations of multi-particle entanglement and
for highly entangled quantum states. The approaches to this problem
include an analytical classifications of entangled states
\cite{popescu, vidal}, numerical optimisation techniques \cite{num1},
and geometric characterisations \cite{top}.

Here we present a class of maximally entangled states, that we call
\emph{general Bell states} or \emph{$2^n$-dimensional Bell states},
generated by an arbitrarily high-dimensional generalisation of the
\textsc{cnot} gate. The advantage of our approach is the simple
algebraic structure of both entangling gates and resulting states. In
order to show the full generality of the method, we prove new
sufficient conditions for both global entanglement and maximal
entanglement (with respect to Meyer and Wallach's measure, see
Equation~\eqref{meyer}): being based on the expectation value of an
explicitly given operator, these criteria feature a simple
formulation, scalability and observability.

In \cite{filters} Osterloh and Siewert propose a general method to
construct new classes of entanglement measures based on suitable
products and combinations of Pauli's matrices.  Inspired by this
approach, as well as by the multi-qubit concurrence proposed in
\cite{braid} and by the relation between antilinear operators and
concurrence \cite{concurrencefilters}, in what follows, we introduce a
particular antilinear operator (Definition~\ref{defop}) and we use its
expectation value as an entanglement criterion
(Proposition~\ref{pcrit}) for general Bell states.  In
Proposition~\ref{pmax}, we show that such an operator turns out to be
related to Meyer and Wallach's (MW) measure \cite{meyerwallach} and we
employ this relation to show that the general Bell states are
maximally entangled with respect to this measure -- Theorem~\ref{thm}.

To the best of our knowledge, an univoque and commonly accepted notion
of entanglement measure in high-dimensional systems has not yet been
introduced.  Several proposals in the literature try to capture
distinct aspects of a maximally entangled state. For instance, the
\emph{Schmidt decomposition}, see \cite{schmidtmeasure}, induces a
measure related to the minimum number of terms in the product
expansion of a state, while the \emph{fully entangled fraction}
measures the ability of a state to perform tasks related to quantum
computing, such as teleportation and dense coding \cite{fraction}.

Throughout this paper, we focus on MW measure
\cite{meyerwallach}. This measure interprets the global entanglement
as the average bipartite entanglement of every qubit with respect to
the rest of the system.  It has thus the advantage of a simple
physical meaning as well as a simple formulation, introduced in
\cite{brennen}:
\begin{equation}\label{meyer}
Q(|\psi\rangle):=2\left(1-\frac{1}{n}\sum_{j=1}^{n} Tr[\rho_j^2]\right) 
\end{equation}
where $n$ is the number of qubits of the system, $\rho_{j,\psi}$ is
the density matrix obtained by tracing out the $j$-th qubit of the
state $|\psi\rangle$ and $Tr[\cdot]$ represents the trace operator.

The main result we present here is a sufficient condition on
multi-qubit states to be maximally entangled (with respect to MW
measure) and, as mentioned above, we establish this result in order to
show that a set of states generalising Bell states have maximal MW
measure.  \vskip0.5cm

The paper is organised as follows. In Section \ref{s2} we show
sufficient conditions for global entanglement and for the maximality
of the MW measure of a state in a multi-qubit system.  In Section
\ref{s3} we propose a generalisation of the \textsc{cnot} gate to
multi-qubit systems a related class of states, that we call
$2^n$-dimensional Bell states.  By applying the criteria introduced in
Section \ref{s2}, we are able to show that these generalisations of
Bell states are maximally entangled with respect to MW measure. Some
possible extensions of this approach are illustrated in Section
\ref{s31}.

\section{An entanglement criterion}\label{s2}
First of all we give the formal definition of \emph{globally
  entangled} state.
\begin{definition}
A state $|\psi\rangle$ is \emph{globally entangled} if for any
$|\phi_1\rangle$ and $|\phi_2\rangle$ we have $|\psi\rangle \neq
|\phi_1\rangle \otimes |\phi_2\rangle$.
\end{definition}

\begin{remark}
Throughout this paper we consider elements of Hilbert spaces
$|\psi\rangle\in\CC^{2^n}$ which are \emph{pure quantum states}, i.e.,
they are complex vectors of unit Euclidean norm:
$|\psi\rangle=(\psi_1,\dots,\psi_{2^n})$ and $\sum_{j=1}^{2^n}
|\psi_j|^2 = 1$; for brevity we refer to them simply as ``states''.
\end{remark}

\begin{notation}
We use the symbol $I_{2^n}$ to denote the $2^n$-dimensional identity
matrix:
$$ I_{2^n} := \underbrace{I_2 \otimes \dots \otimes I_2}_{n-\text{times}}.$$
being $I_{2^n} = (1)$ if $n=0$.
 
The \emph{expectation value} of the operator $A$ in the state $\psi$
is denoted by
$$\langle A \rangle_\psi := \langle \psi|A| \psi\rangle.$$  

Moreover we denote by $\sigma_y$ the Pauli matrix 
$$\begin{pmatrix}
  0&-i\\ 
  i&0
\end{pmatrix}.$$
\end{notation}

We introduce the following two operators, they are used to define the
particular antilinear operator we apply to states constructed with
algorithm in Section~{\ref{s3}} in order to prove they are entangled
states.
\begin{definition}\label{defop}
Let us denote by $\FF : \CC^{2^n} \to \CC$ the function which
associates to a state $|\psi\rangle$ the expectation value of the
operator $M_{2^n}K_{2^n}$ in the state $|\psi\rangle$, namely:
\begin{equation}
\FF(|\psi\rangle):= 
\langle M_{2^n}K_{2^n}\rangle_\psi 
\end{equation}
where 
$M_{2^n}:=\sigma_y\otimes I_{2^{n-2}}\otimes\sigma_y$ and
$K_{2^n}$ is the \emph{conjugation operator}.             
\end{definition}

\begin{figure}[top!]
\begin{center}
\begin{tabular}{ccc}
\includegraphics[width=0.3\linewidth]{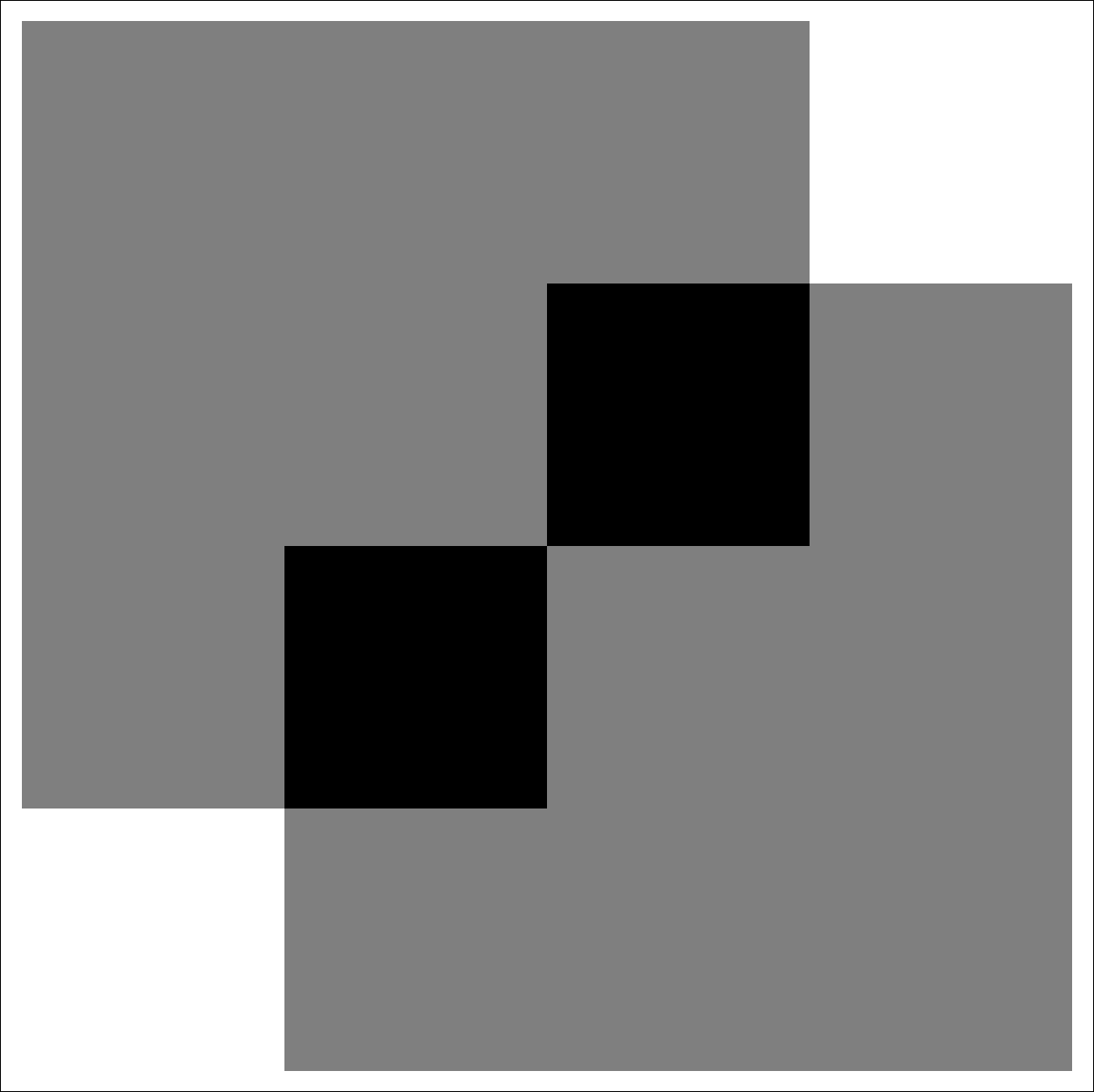} &
\includegraphics[width=0.3\linewidth]{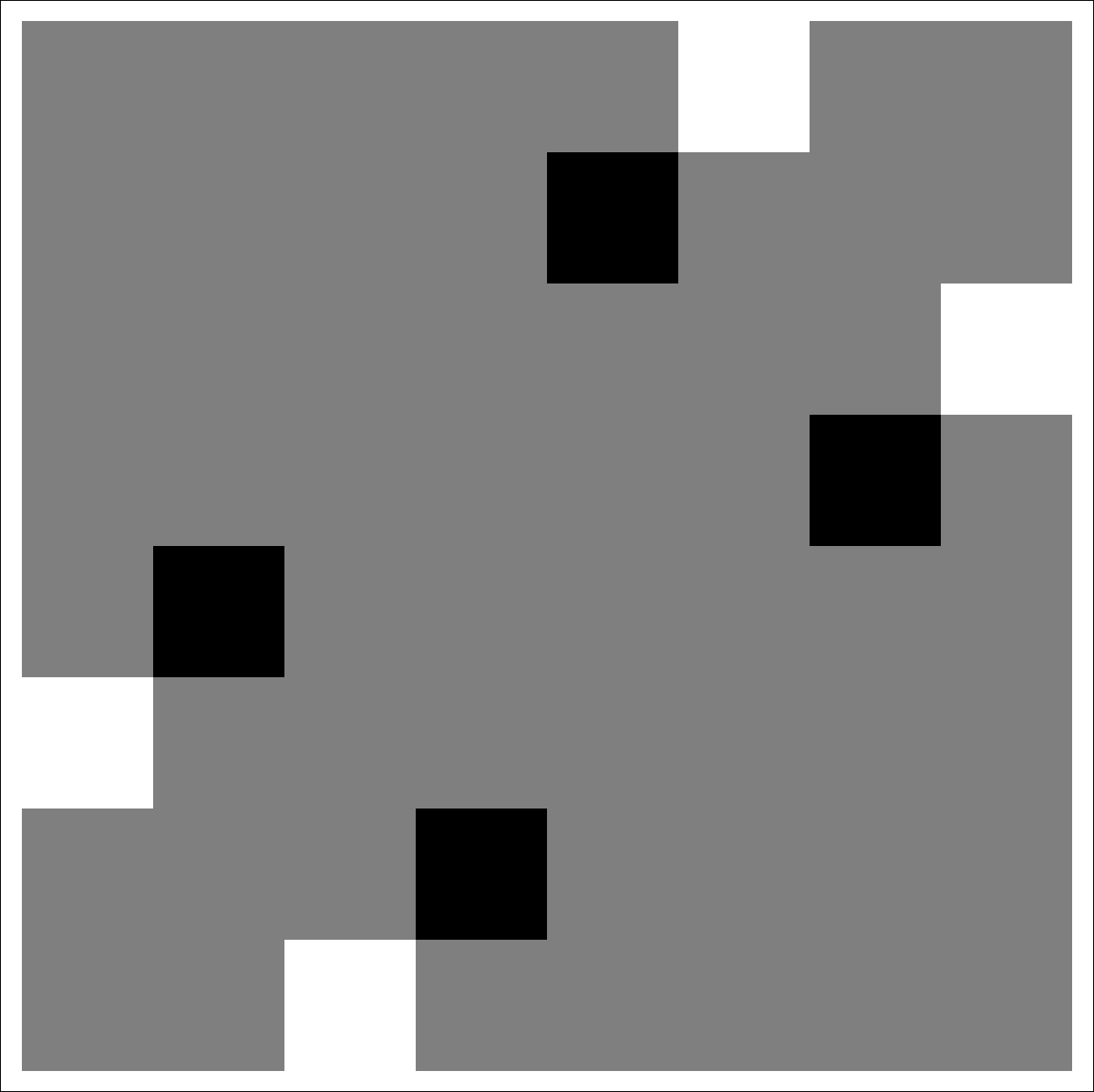} &
\includegraphics[width=0.3\linewidth]{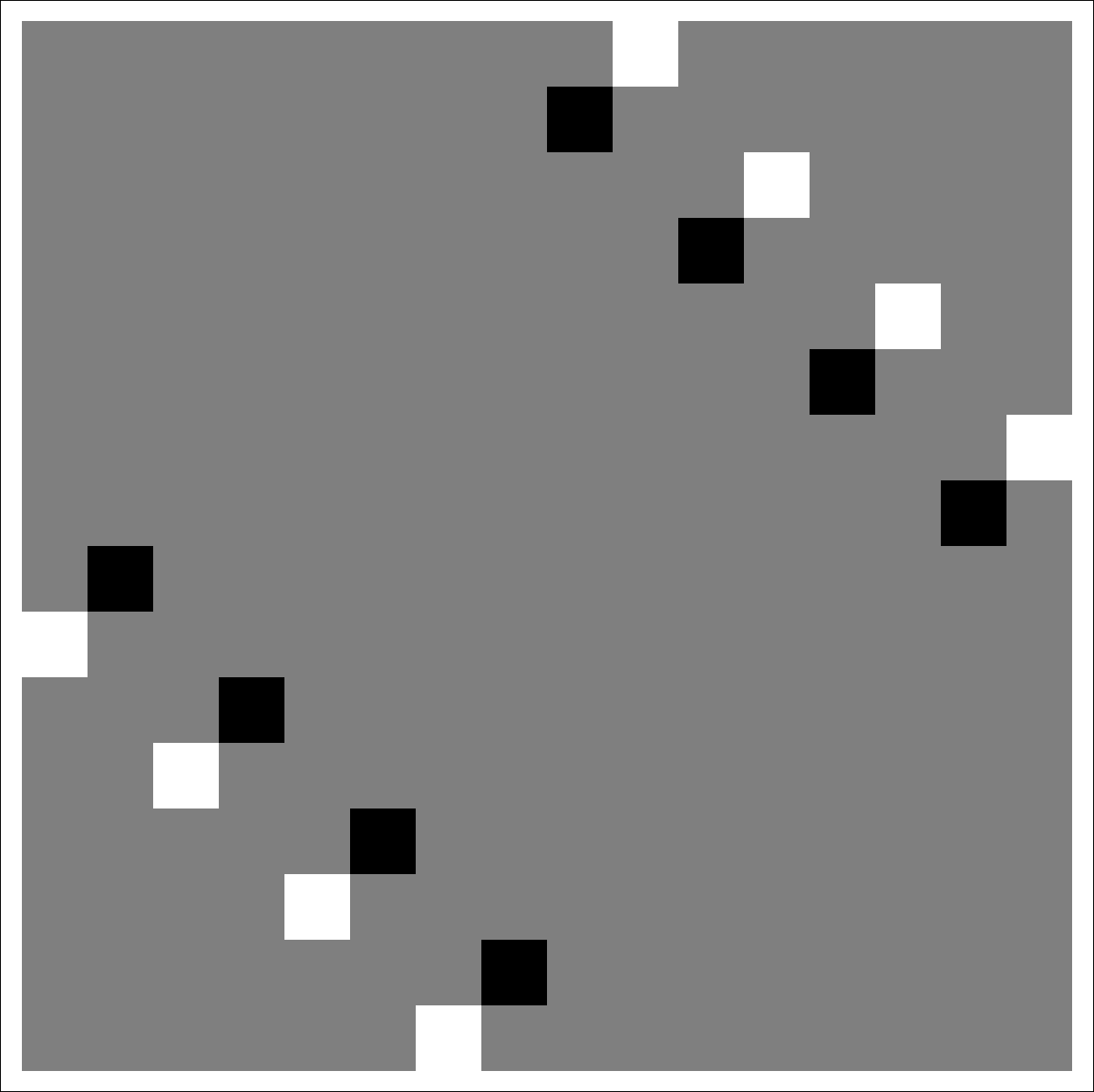} \\
\includegraphics[width=0.3\linewidth]{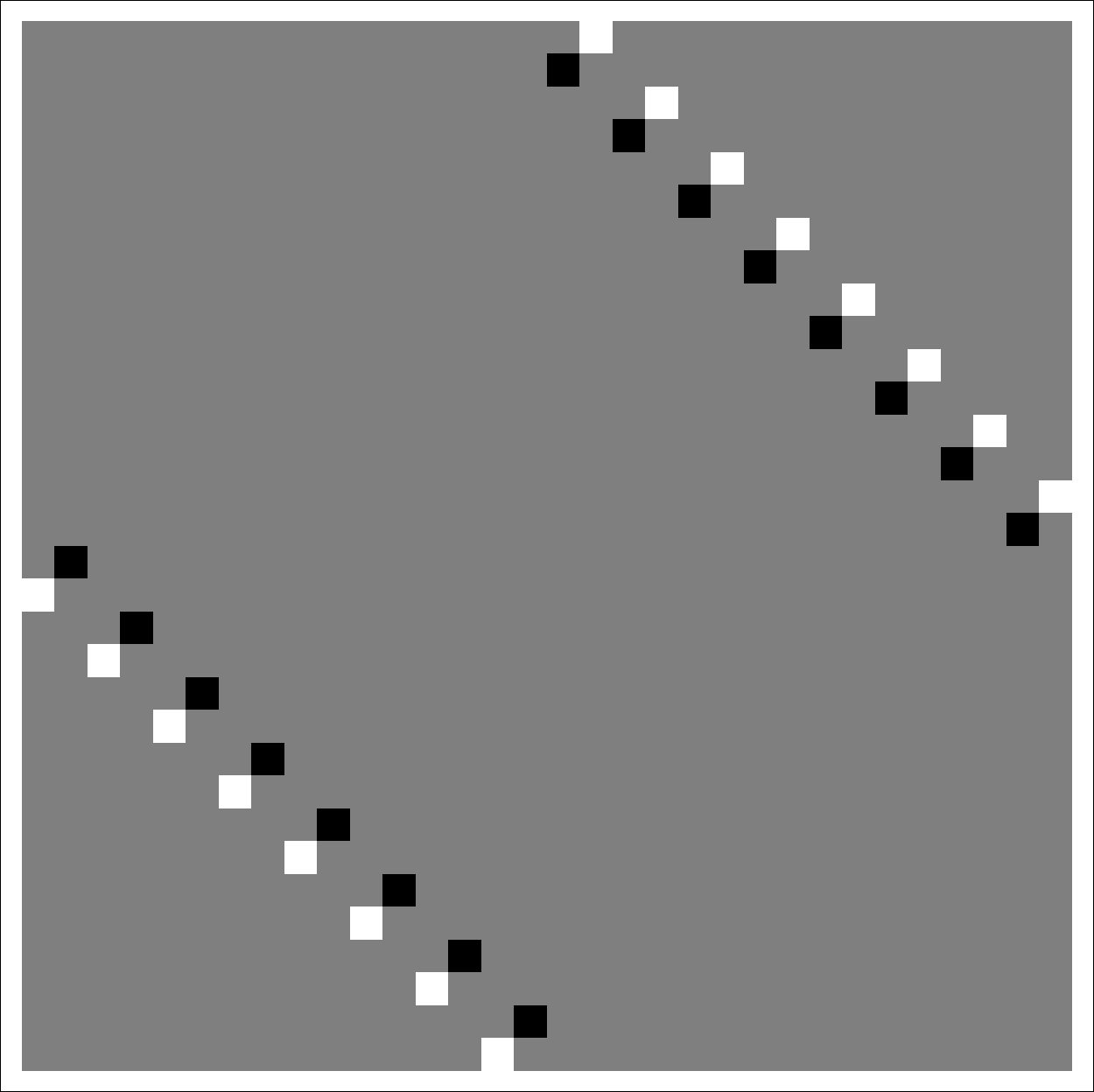} &
\includegraphics[width=0.3\linewidth]{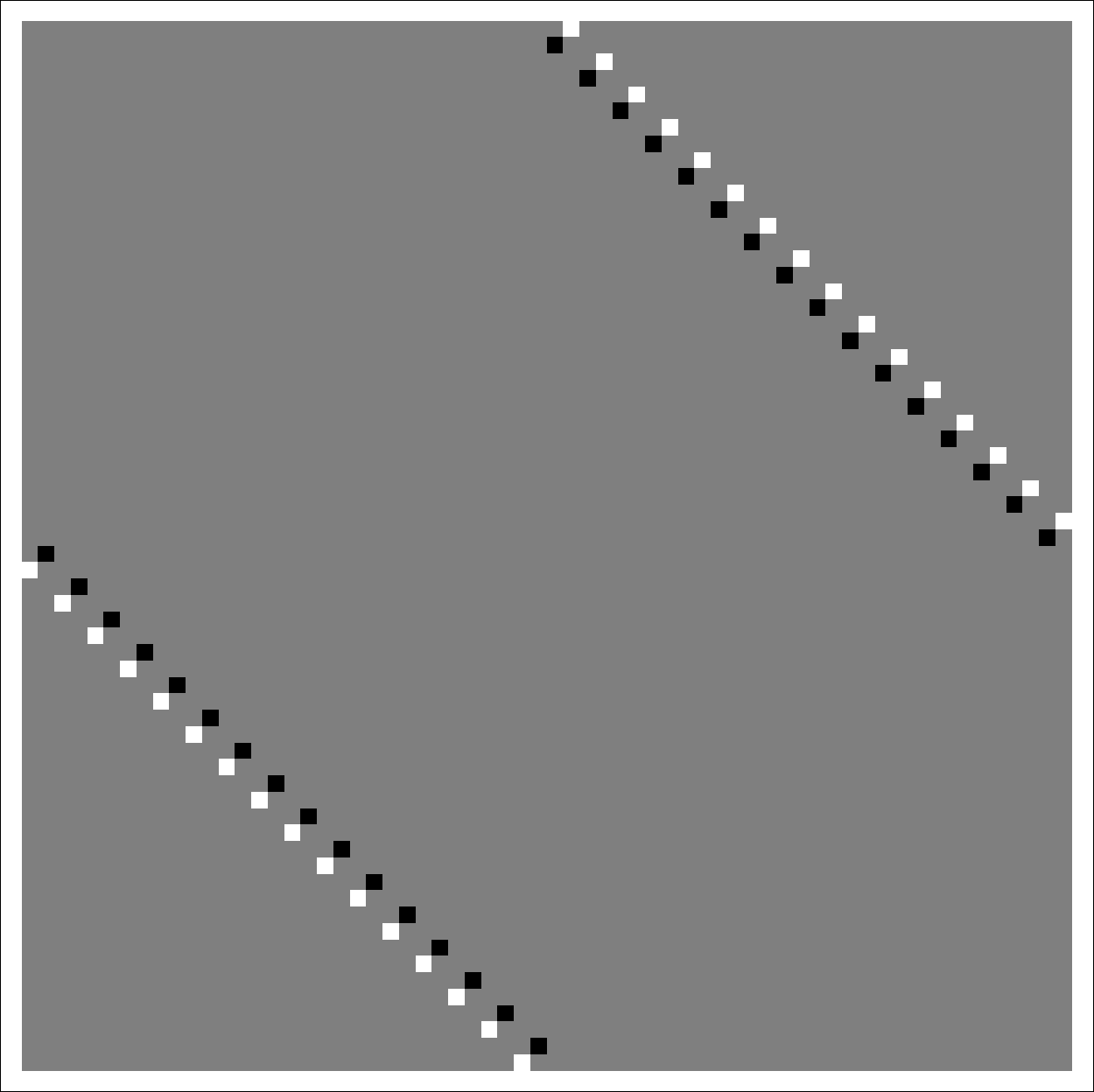} &
\includegraphics[width=0.3\linewidth]{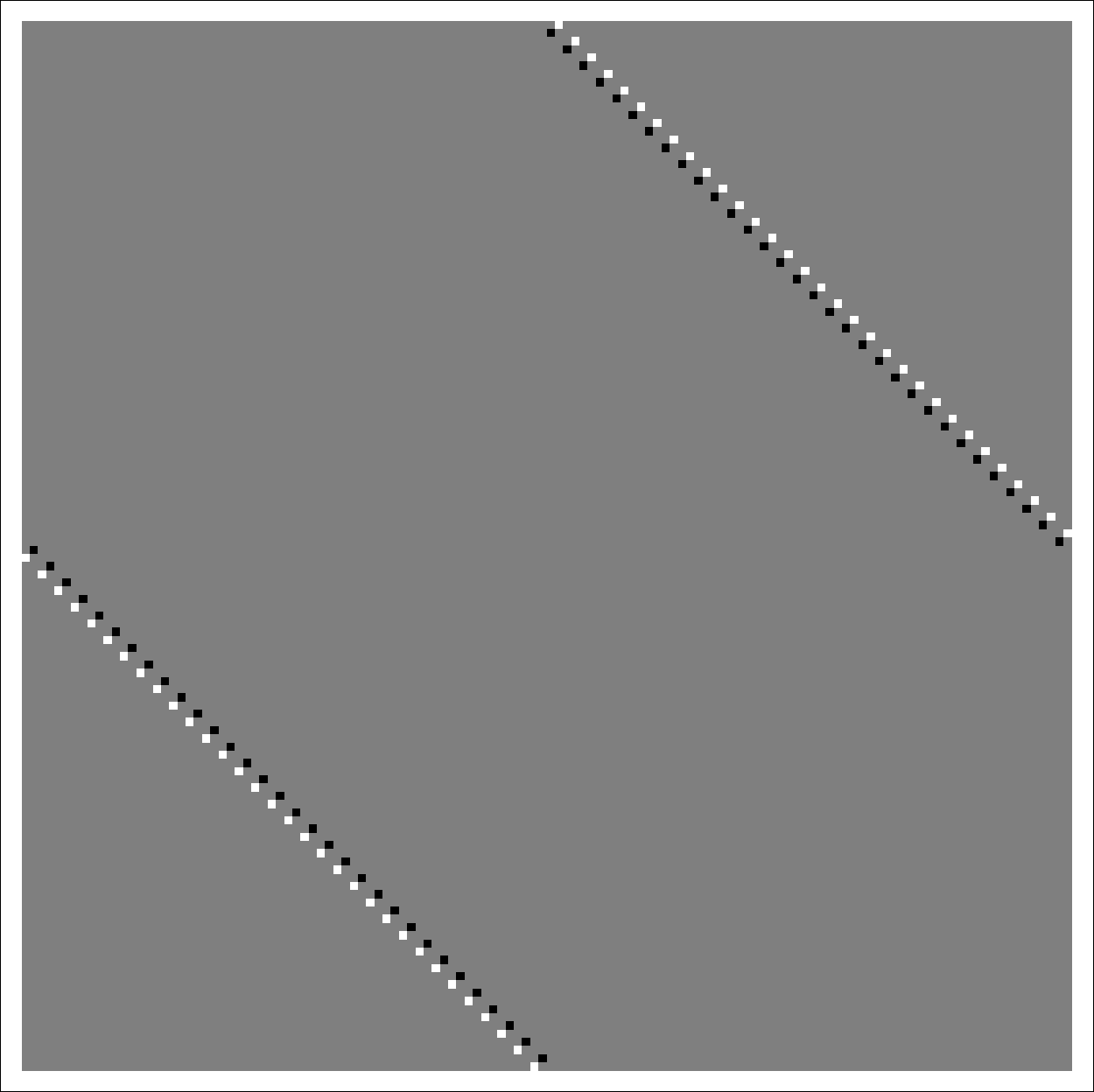} 
\end{tabular}
\end{center}
\caption{We show in this picture the matrices $M_{2^n}$ for $n=2,\dots
  ,7$.  Entries $0$ are shown in grey color, entries $+1$ by black
  color and entries $-1$ by white color.}\label{figM}
\end{figure}
Note that $$\FF(|\psi\rangle):= \langle
M_{2^n}K_{2^n}\rangle_\psi=\langle \psi|M_{2^n}K_{2^n}| \psi\rangle =
\langle \psi|M_{2^n}|\bar \psi\rangle$$ where $|\bar \psi\rangle$
denotes the complex conjugate of $|\psi \rangle$.
\medskip

\begin{example}
We explicitly compute $M_4$:
\begin{align*}
M_4 &:= \sigma_y\otimes I_{2^{2-2}}\otimes\sigma_y = 
\sigma_y\otimes(1)\otimes\sigma_y =\\
	& = \sigma_y\otimes\sigma_y = \begin{pmatrix}
0&-i\\
i&0
\end{pmatrix} \otimes\sigma_y= \begin{pmatrix}
0&-i \sigma_y\\
i \sigma_y&0
\end{pmatrix} 
 = \begin{pmatrix}
0 & 0 & 0 & -1 \\
0 & 0 & 1 & 0 \\
0 & 1 & 0 & 0 \\
-1 & 0 & 0 & 0 \\
\end{pmatrix} 
\end{align*}
For a representation of matrices $M_{2^n}$ with $n\geq 2$ see Figure
\ref{figM}.
\end{example}

We now show that $M_{2^n}K_{2^n}$ has zero expectation value on
product states.
\begin{proposition}\label{pcrit}
If $|\psi\rangle$ is an unentangled state then $\FF(|\psi\rangle)=0$.
\end{proposition}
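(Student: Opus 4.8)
The plan is to reduce the vanishing of $\FF$ on product states to a statement about \emph{antisymmetric} matrices. The key structural observation is that the conjugation operator $K_{2^n}$ converts the sesquilinear quantity $\langle\psi|M_{2^n}K_{2^n}|\psi\rangle$ into the genuinely \emph{bilinear} form $v\tsp M_{2^n}v$ evaluated at $v=|\bar\psi\rangle$ --- this is exactly the identity $\FF(|\psi\rangle)=\langle\psi|M_{2^n}|\bar\psi\rangle$ already noted after Definition~\ref{defop}. Bilinear forms attached to antisymmetric matrices vanish identically, and although $M_{2^n}$ itself is symmetric ($\sigma_y\tsp=-\sigma_y$, so the two $\sigma_y$ factors make $M_{2^n}\tsp=M_{2^n}$), each ``half'' of $M_{2^n}$ along a product bipartition is antisymmetric, and that is what kills the expectation value.

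Concretely I would proceed in three steps. \emph{(i)} Record the elementary fact that $v\tsp Av=0$ for every complex antisymmetric matrix $A$ and every vector $v$: the scalar $v\tsp Av$ equals its own transpose $v\tsp A\tsp v=-v\tsp Av$. Since $\sigma_y\tsp=-\sigma_y$ and $(A\otimes B)\tsp=A\tsp\otimes B\tsp$, the matrices $\sigma_y\otimes I_{2^{j}}$ and $I_{2^{j}}\otimes\sigma_y$ are antisymmetric for every $j\ge 0$ (with $I_{2^0}=(1)$). \emph{(ii)} Unpack the hypothesis: an unentangled $|\psi\rangle$ factors as $|\psi\rangle=|\phi_1\rangle\otimes|\phi_2\rangle$ with $|\phi_1\rangle\in\CC^{2^k}$ supported on the first $k$ qubits and $|\phi_2\rangle\in\CC^{2^{n-k}}$ on the remaining $n-k$ qubits, $1\le k\le n-1$. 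Because $M_{2^n}=\sigma_y\otimes I_{2^{n-2}}\otimes\sigma_y$ acts nontrivially only on the first and last qubit, it factors along this bipartition as $M_{2^n}=M'\otimes M''$, with $M'=\sigma_y\otimes I_{2^{k-1}}$ acting on the first block (which contains qubit $1$) and $M''=I_{2^{n-k-1}}\otimes\sigma_y$ acting on the second block (which contains qubit $n$); likewise $K_{2^n}=K_{2^k}\otimes K_{2^{n-k}}$, since conjugation distributes over tensor products. Hence $\FF(|\psi\rangle)=\langle\phi_1|M'|\bar\phi_1\rangle\cdot\langle\phi_2|M''|\bar\phi_2\rangle$. \emph{(iii)} Conclude: putting $v=|\bar\phi_1\rangle$, the first factor is $\langle\phi_1|M'|\bar\phi_1\rangle=v\tsp M'v=0$ by step (i), since $M'=\sigma_y\otimes I_{2^{k-1}}$ is antisymmetric; therefore $\FF(|\psi\rangle)=0$. (Symmetrically $\langle\phi_2|M''|\bar\phi_2\rangle=0$ as well, so either factor alone suffices.)

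The substantive point that deserves an explicit check is the factorization $M_{2^n}=M'\otimes M''$ and $K_{2^n}=K_{2^k}\otimes K_{2^{n-k}}$, together with the boundary bookkeeping: when $k=1$ the first block is a single qubit and $M'=\sigma_y$, when $k=n-1$ the second block is a single qubit and $M''=\sigma_y$, and for small $n$ one reads $I_{2^0}=(1)$. Beyond that everything reduces to the two-line antisymmetry observation, so I do not expect a genuine obstacle; the only modelling choice is to read ``$|\psi\rangle=|\phi_1\rangle\otimes|\phi_2\rangle$'' in the definition of global entanglement as a splitting of the qubit register into an initial and a final segment --- the natural convention here, which in particular covers all fully separable states.
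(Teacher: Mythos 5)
Your proposal is correct and follows essentially the same route as the paper: both split $|\psi\rangle=|\phi_1\rangle\otimes|\phi_2\rangle$ along an initial/final segment of the qubit register, factor $M_{2^n}=(\sigma_y\otimes I_{2^{n_1-1}})\otimes(I_{2^{n_2-1}}\otimes\sigma_y)$ across that bipartition, and reduce $\FF(|\psi\rangle)$ to a product of two single-factor expectation values, each of which vanishes. The only difference is cosmetic: where you invoke the antisymmetry of $\sigma_y\otimes I$ to get $v\tsp M'v=0$ in two lines, the paper verifies the same vanishing by writing out the coordinates explicitly (e.g.\ $-i\,a\cdot b+i\,b\cdot a=0$), which is just the coordinate form of your observation.
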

\begin{proof}
Let $n\geq 1$, $|\psi\rangle\in\CC^{2^n}$, $|\phi_1\rangle\in
\CC^{2^{n_1}}$, $|\phi_2\rangle\in \CC^{2^{n_2}}$, $n_1+n_2=n$ and
assume $|\psi\rangle=|\phi_1\rangle \otimes |\phi_2\rangle$.  Also let
$|\bar \phi_1\rangle = (a,b) \in \CC^{2^{n_1}}$ defined by $a,b\in
\CC^{2^{n_1-1}}$, two half vectors of its coordinates in the standard
basis.  One has
$$(\sigma_y\otimes I_{2^{n_1-1}}) (a,b)=-i(b,-a).$$
Consequently
\begin{equation}\label{e01}
\langle \phi_1|\sigma_y\otimes I_{2^{n_1-1}} |\bar \phi_1\rangle= 
(a,b)\tsp(\sigma_y\otimes I_{2^{n_1-1}}) (a,b)=-i a\cdot b+ i b\cdot a=0.
\end{equation}
Similarly, if $|\bar \phi_2\rangle=(y_1,\dots,y_{2^{n_2}})$ then
\begin{align*}
( I_{2^{n_2-1}}\otimes \sigma_y) |\bar \phi_2\rangle
&=( I_{2^{n_2-1}}\otimes  \sigma_y) (y_1,\dots,y_{2^n})\\ 
&=-i(y_2,-y_1,y_3,-y_4,\dots,y_{2^{n_2}},-y_{2^{n_2}-1})
\end{align*}
thus
\begin{equation}\label{e02}
\begin{split}
\langle \phi_2| I_{2^{n_2-1}}& \otimes \sigma_y|\bar 
\phi_2\rangle=(y_1,\dots,y_{2^{n_2}})\tsp( I_{2^{n_2-1}}\otimes \sigma_y) 
(y_1,\dots,y_{2^{n_2}})\\
&=-i(y_1,\dots,y_{2^{n_2}})\tsp(y_2,-y_1,y_3,-y_4,\dots,y_{2^{n_2}},-y_{2^{n_2}
-1})=0.                            
\end{split}
\end{equation}
By equalities (\ref{e01}) and (\ref{e02}) one finally has 
\begin{align*}
\FF(\ket\psi)&=  \langle \psi|M_{2^n}|\bar \psi\rangle=\langle \phi_1 \otimes 
\phi_2|\sigma_y\otimes I_{2^{n-2}}\otimes\sigma_y|\bar \phi_1 \otimes \bar 
\phi_2\rangle\\
&=\langle \phi_1 \otimes \phi_2|\sigma_y\otimes I_{2^{n_1-1}}\otimes  
I_{2^{n_2-1}}\otimes\sigma_y|\bar \phi_1 \otimes \bar \phi_2\rangle\\
&=\langle \phi_1|\sigma_y\otimes I_{2^{n_1-1}} |\bar \phi_1\rangle 
\langle\phi_2|I_{2^{n_2-1}}\otimes\sigma_y|\bar \phi_2\rangle=0.
\end{align*}
\end{proof}

Next result shows that $\FF$ also provides a sufficient condition for
maximal entanglement.  It is useful to recall the following
\begin{definition}[Schimdt decomposition] Let $n_1,n_2\in\NN$ such that 
$n_1+n_2=n$ and let $A=\CC^{2^{n_1}}$ and $B=\CC^{2^{n_2}}$ so that 
$\CC^{2^n}=A\otimes B$. Then any state $|\psi\rangle\in \CC^{2^n}$ can be 
written in the form 
$$|\psi\rangle = \sum_{k=1}^K c_k|\phi_k^A\rangle\otimes
|\phi_k^B\rangle$$ where $K=\min\{dim(A),dim
(B)\}=\min\{2^{n_1},2^{n_2}\}$, $c_k\geq 0$ and
$\{|\phi_k^A\rangle\}$, $\{|\phi^B_k\rangle\}$ are two orthonormal
subsets of $A$ and $B$, respectively \cite{schmidt}.  This
decomposition takes the name of \emph{Schmidt
  decomposition\footnote{More generally, the Schmidt decomposition is
    well defined for pure states belonging to general Hilbert spaces
    $X$. }}.
\end{definition}
\begin{remark}\label{rmktrace}
Consider the decomposition $\CC^{2^n}=A\otimes B$ and let
$\rho_{A,\psi}$ be the density operator of the state $|\psi\rangle$ on
the subsystem $A$.  Then the set of the positive eigenvalues of
$\rho_{A,\psi}$ coincides with the set $\{c_k^2\mid c_k>0\}$ of
positive squared coefficients of Schmidt decomposition of the state
$|\psi\rangle$ with respect to the decomposition $\CC^{2^n}=A\otimes
B$ -- see for instance \cite{gentle}. As a consequence,
$Tr[\rho_{A,\psi}]=\sum_{k=1}^K c_k^2=1$ and
$Tr[\rho^2_{A,\psi}]=\sum_{k=1}^K c_k^4$.
\end{remark}

\begin{proposition}\label{pmax}
If $|\FF(|\psi\rangle)|=1$ then $|\psi\rangle$ is maximally entangled
with respect to MW measure.
\end{proposition}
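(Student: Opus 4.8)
The plan is to first establish that $|\FF(|\psi\rangle)|\le 1$ for every state, so that the hypothesis $|\FF(|\psi\rangle)|=1$ becomes a saturation condition whose equality case can be analysed. Split the qubits into the two end qubits $A=\{1,n\}$ and the interior register $C=\{2,\dots,n-1\}$, writing
$$|\psi\rangle=\sum_{a,d\in\{0,1\}}\sum_{c}\psi_{a,c,d}\;|a\rangle\otimes|c\rangle\otimes|d\rangle ,$$
with $c$ running over the standard basis of $\CC^{2^{n-2}}$, and for each $c$ let $\Psi^{(c)}$ be the $2\times 2$ block $(\Psi^{(c)})_{ad}=\psi_{a,c,d}$. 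Since $\sigma_y|b\rangle=i(-1)^b|1-b\rangle$ one gets $(M_{2^n}|\bar\psi\rangle)_{a,c,d}=-(-1)^{a+d}\bar\psi_{1-a,c,1-d}$, and substituting this into $\FF(|\psi\rangle)=\langle\psi|M_{2^n}|\bar\psi\rangle$ the sum over $a,d$ at each fixed $c$ collapses to a $2\times2$ determinant, giving
$$\FF(|\psi\rangle)=-2\,\overline{\sum_{c}\det\Psi^{(c)}} .$$

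Next I would use the elementary bound $|\det X|=\sigma_1(X)\,\sigma_2(X)\le\tfrac12\|X\|_F^2$ for $2\times2$ matrices, with equality precisely when the two singular values agree, i.e.\ when $X$ is a nonnegative scalar multiple of a unitary. Then
$$|\FF(|\psi\rangle)|=2\Bigl|\sum_c\det\Psi^{(c)}\Bigr|\le 2\sum_c|\det\Psi^{(c)}|\le\sum_c\|\Psi^{(c)}\|_F^2=\sum_{a,c,d}|\psi_{a,c,d}|^2=1 ,$$
so $|\FF(|\psi\rangle)|=1$ forces equality everywhere: for each $c$ one must have $\Psi^{(c)}=\mu_c V_c$ with $\mu_c\ge0$ and $V_c$ unitary (hence $\mu_c^2=\tfrac12\|\Psi^{(c)}\|_F^2$ and $\sum_c\mu_c^2=\tfrac12$), and moreover all the numbers $\det\Psi^{(c)}$ must share a common argument.

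From the block‑unitary form the end‑qubit marginals are immediate: $(\rho_{1,\psi})_{aa'}=\sum_{c,d}\psi_{a,c,d}\overline{\psi_{a',c,d}}=\sum_c(\Psi^{(c)}(\Psi^{(c)})^{\dagger})_{aa'}=\sum_c\mu_c^2\,\delta_{aa'}=\tfrac12\delta_{aa'}$, so $\rho_{1,\psi}=\tfrac12 I_2$, and the symmetric computation gives $\rho_{n,\psi}=\tfrac12 I_2$; equivalently, via Remark~\ref{rmktrace}, the Schmidt coefficients of $|\psi\rangle$ for the cut $\{1\}$ versus the rest are $\tfrac{1}{\sqrt{2}},\tfrac{1}{\sqrt{2}}$, and likewise for $\{n\}$. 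Thus $Tr[\rho_{1,\psi}^2]=Tr[\rho_{n,\psi}^2]=\tfrac12$, the smallest value a single‑qubit purity can take. If one can establish the same for every interior qubit, $Tr[\rho_{j,\psi}^2]=\tfrac12$ for $2\le j\le n-1$, then Equation~\eqref{meyer} gives $Q(|\psi\rangle)=2\bigl(1-\tfrac{1}{n}\cdot\tfrac{n}{2}\bigr)=1$, the maximum of the MW measure, and the proof is complete.

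The interior qubits are where I expect the real difficulty. The operator $M_{2^n}$ is the identity on $C$ and couples only the two end qubits, so the saturation data $\{\mu_c,V_c\}$ controls how the pair $\{1,n\}$ is correlated with $C$ as a single block but does not constrain the internal structure of $C$; the only additional leverage available is the common‑argument condition on the $\det\Psi^{(c)}$ together with $\sum_c\mu_c^2=\tfrac12$. Before attempting to push that through I would first test the interior claim in the smallest case $n=3$ (so $c\in\{0,1\}$ and $C$ is a single qubit): a state like $\tfrac{1}{\sqrt{2}}\bigl(|000\rangle+|101\rangle\bigr)$ already satisfies $|\FF(|\psi\rangle)|=1$ yet factors as $|0\rangle$ on the middle qubit, so $Tr[\rho_{2,\psi}^2]=1\neq\tfrac12$. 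Examples of this kind suggest that the passage from ``the end qubits are maximally mixed'' to ``all qubits are maximally mixed'' cannot be made for an arbitrary pure state, so the argument will have to supplement the hypothesis $|\FF(|\psi\rangle)|=1$ — most naturally by restricting $|\psi\rangle$ to the general Bell states of Section~\ref{s3} and exploiting their extra combinatorial structure — or else extract from $|\FF(|\psi\rangle)|=1$ a stronger structural consequence than the block‑unitary normalisation obtained above. Pinning down which of these is the right move is, in my view, the crux of the whole proof.
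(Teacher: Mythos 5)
Your computation is correct as far as it goes: the identity $\FF(\ket\psi)=-2\,\overline{\sum_c\det\Psi^{(c)}}$, the bound $|\FF(\ket\psi)|\le 1$, and the conclusion that saturation forces $\rho_{1,\psi}=\rho_{n,\psi}=\tfrac12 I_2$ are all right. More importantly, the difficulty you flag at the end is not a gap in your approach but a genuine flaw in the statement itself: your test state $\tfrac{1}{\sqrt2}(\ket{000}+\ket{101})$ really does satisfy $M_8\ket\psi=-\ket\psi$, hence $|\FF(\ket\psi)|=1$, while its middle qubit is in the pure state $\ket 0$, so $Q(\ket\psi)=2/3<1$. Thus Proposition~\ref{pmax} is false for $n\ge 3$ and no completion of your argument is possible. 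The paper's own proof fails exactly where you predicted: after taking the Schmidt decomposition $\ket\psi=\sum_k c_k\ket{\phi_k}\otimes\ket{\xi_k}$ that singles out qubit $j$, it writes $\FF(\ket\psi)=\sum_{k,h}c_kc_h\langle\phi_k|\sigma_y|\bar\phi_h\rangle\langle\xi_k|I_{2^{n-2}}\otimes\sigma_y|\bar\xi_h\rangle$, but this factorisation of $M_{2^n}=\sigma_y\otimes I_{2^{n-2}}\otimes\sigma_y$ across the cut $\{j\}$ versus the rest is valid only for $j=1$ (and, mutatis mutandis, $j=n$). For an interior qubit $M_{2^n}$ acts as $I_2$ on $\{j\}$, the diagonal terms $k=h$ no longer vanish, and the key inequality $|\FF(\ket\psi)|\le 2c_1c_2$ breaks down; the sentence ``since this argument holds for any qubit'' is the unsupported step.

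One further caution about the repair you propose: restricting to the general Bell states does not help, because the error propagates to Theorem~\ref{thm}. For $n=3$ one computes $B_8\ket{000}=\tfrac12(\ket{000}+\ket{010}+\ket{101}+\ket{111})$, in which the middle qubit factors out in the state $\tfrac{1}{\sqrt2}(\ket0+\ket1)$; this Bell state satisfies $|\FF|=1$ yet has $Q=2/3$ and is not even globally entangled. What your argument (and the paper's, applied to $j\in\{1,n\}$) actually establishes is the weaker statement that $|\FF(\ket\psi)|=1$ forces each of the first and last qubits to be maximally entangled with the rest of the system. Any correct strengthening to maximal MW entanglement would require either an operator that genuinely couples all $n$ qubits or additional hypotheses on $\ket\psi$ beyond $|\FF(\ket\psi)|=1$.
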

\begin{proof}
First of all we notice that
 \begin{enumerate}[(a)]
  \item If $|\phi\rangle\in \CC^2$ then $\langle \phi| \sigma_y |\bar 
\phi\rangle=0$;
  
  \item If $\{|\phi_1\rangle,|\phi_2\rangle\}$ is an orthonormal base of $\CC^2$ 
then 
  $|\langle \phi_1| \sigma_y |\bar \phi_2\rangle|=1$
  and
  $$\langle \phi_1| \sigma_y |\bar \phi_2\rangle=- \langle \phi_2| \sigma_y 
|\bar \phi_1\rangle$$
  
  \item For all $|\xi_1\rangle,|\xi_2\rangle\in \CC^{2^{n-1}}$ one has 
  $$|\langle \xi_1|  I_{2^{n-2}}\otimes\sigma_y| \bar \xi_2\rangle|\leq 1;$$
  and
  $$\langle \xi_1|  I_{2^{n-2}}\otimes\sigma_y \bar \xi_2\rangle= -\langle 
\xi_2|  I_{2^{n-2}}\otimes\sigma_y |\bar \xi_1\rangle. $$
   \end{enumerate}
Also remark that the Schmidt decomposition of $|\psi\rangle$ with
respect the decomposition that singles out a generic qubit of the
system reads:
$$|\psi\rangle=\sum_{k=1}^2 c_k |\phi_k\rangle\otimes |\xi_k\rangle$$
for some $c_1,c_2\geq0$ such that $c_1^2+c_2^2=1$, some orthonormal base 
$\{|\phi_1\rangle,|\phi_2\rangle\}$ of $\CC^2$ and some orthonormal subset 
$\{|\xi_1\rangle,|\xi_2\rangle\}$ of $\CC^{2^{n-1}}$.

In view of (a)-(c), we then have
\begin{align*}
|\FF(|\psi\rangle)|&=|\sum_{k,h=1}^2  c_k c_h \langle \phi_k| \sigma_y  
|\phi_h\rangle\langle \xi_k|  I_{2^{n-2}}\otimes\sigma_y |\bar \xi_h\rangle |\\
       &=|2 c_1c_2 \langle \xi_1|  I_{2^{n-2}}\otimes\sigma_y| \bar 
\xi_2\rangle|\leq |2 c_1c_2|
\end{align*}

On the other hand $|2 c_1c_2|\leq 1$ for all $c_1,c_2\in\RR$ such that
$c_1^2+c_2^2=1$, and the maximum $|2 c_1c_2|=1$ is attained at the
points satisfying $c_1^2=c_2^2=1/2$.  Therefore we may conclude that
if $|\FF(|\psi\rangle)|=1$ then $c_1^2=c_2^2=1/2$.  Since this
argument holds for any qubit, we have that
$$Tr[\rho_{j,\psi}^2]=c_1^4+c_2^4=\frac{1}{2} \quad \text{ for all 
$j=1,\dots,n$}$$
see also Remark \ref{rmktrace}.
Consequently,  
$$Q(|\psi\rangle)=2\left(1-\frac{1}{n}\sum_{j=1}^{n} 
Tr[\rho_{j,\psi}^2]\right)=1.$$
\end{proof}

Above results relate the value of $|\mathcal F(\ket \psi)|$ to a
measure of entanglement of the state $\ket \psi$. In particular if
$|\mathcal F(\ket \psi)|$ is minimal, i.e., $|\mathcal F(\ket
\psi)|=0$, then $\ket \psi$ is not entangled while if $|\mathcal
F(\ket \psi)|$ is maximal, i.e., $|\mathcal F(\ket \psi)|=1$ then
$\ket \psi$ is maximally entangled. However the condition $|\mathcal
F(\ket \psi)|=0$ (respectively $|\mathcal F(\ket \psi)|=1$) is a necessary
 but not  sufficient condition to have $\ket \psi$ unentangled
(resp. maximally entangled).  Indeed, consider the
\emph{Greenberger-Horne-Zeilinger} state
$$|GHZ_n\rangle:=\frac{1}{\sqrt{2}}(|0_n\rangle+|1_n\rangle).$$
 
For all $n\geq 2$, the state $|GHZ_n\rangle$ is globally entangled
state and yet, for $n\geq 3$, $\FF(|GHZ_n\rangle)=0$: this implies
that, in general, the inverse implication of Proposition \ref{pcrit}
(that is, $\FF(\ket \psi)=0$ implies $\ket \psi$ is unentangled) is
not true.  Furthermore, for all $n\geq 2$, the state $|GHZ_n\rangle$
is maximally entangled with respect to MW measure and $\FF(\ket
\psi)\not=1$, thus also the inverse implication of Proposition
\ref{pmax} (that is, $\FF(\ket \psi)=1$ implies $\ket \psi$ is
maximally entangled) in general is not true.

\section{$n$-qubit entanglement algorithm}\label{s3}
In this section we introduce a generalisation of the \textsc{cnot}
gate and we show that the resulting Bell state are fully entangled.

\begin{figure}\label{circuito}
\begin{center}
\definecolor{white}{RGB}{255,255,255}
\definecolor{cffcc00}{RGB}{230,230,230}
\begin{tikzpicture}[y=0.80pt, x=0.8pt,yscale=-1, inner sep=0pt, outer sep=0pt,
draw=black,fill=black,line join=miter,line cap=rect,miter limit=10.00,line 
width=0.800pt]
  \begin{scope}[shift={(-333.0,-291.0)},draw=white,fill=white]
    \path[fill,rounded corners=0.0000cm] (333.0000,291.0000) rectangle
      (676.0000,471.0000);
  \end{scope}
  
\begin{scope}[cm={{1.0,0.0,0.0,1.0,(-333.0,-291.0)}},draw=cffcc00,fill=cffcc00]
    \path[fill,rounded corners=0.0000cm] (458.0000,306.0000) rectangle
      (488.0000,336.0000);
  \end{scope}
  \begin{scope}[cm={{1.0,0.0,0.0,1.0,(-333.0,-291.0)}},line cap=butt,miter 
limit=1.45]
    \path[fill] (464.5908,325.5352) node[above right] (text4235) {$H_2$};
    \path[draw,rounded corners=0.0000cm] (458.0000,306.0000) rectangle
      (488.0000,336.0000);
    \path[fill] (380.2402,325.5352) node[above right] (text4239) {$\ket0$};
    \path[fill] (467.5986,385.5352) node[above right] (text4241) {$\otimes$};
    \path[fill] (611.5889,385.5352) node[above right] (text4243) {$\beta$};
  \end{scope}
  
\begin{scope}[cm={{1.0,0.0,0.0,1.0,(-333.0,-291.0)}},draw=cffcc00,fill=cffcc00]
    \path[fill,rounded corners=0.0000cm] (539.0000,366.0000) rectangle
      (569.0000,396.0000);
  \end{scope}
  \begin{scope}[cm={{1.0,0.0,0.0,1.0,(-333.0,-291.0)}},line cap=butt,miter 
limit=1.45]
    \path[fill] (540.9609,383.0352) node[above right] (text4251) {\tiny CNOT};
    \path[draw,rounded corners=0.0000cm] (539.0000,366.0000) rectangle
      (569.0000,396.0000);
    \path[fill] (380.2402,445.5352) node[above right] (text4255) {$\ket1$};
    \path[draw] (407.0000,321.0000) -- (450.0000,321.0000);
    \path[fill] (458.0000,321.0000) -- (446.0000,316.0000) -- 
(449.0000,321.0000) --
      (446.0000,326.0000) -- cycle;
    \path[draw] (473.0000,336.0000) -- (473.0000,371.5000);
    \path[fill] (473.0000,379.5000-5) -- (478.0000,367.5000-5) -- 
(473.0000,370.5000-5) --
      (468.0000,367.5000-5) -- cycle;
    \path[draw] (480.9960,381.0000) -- (531.0000,381.0000);
    \path[fill] (539.0000,381.0000) -- (527.0000,376.0000) -- 
(530.0000,381.0000) --
      (527.0000,386.0000) -- cycle;
    \path[draw] (569.0000,381.0000) -- (602.0000,381.0000);
    \path[fill] (610.0000,381.0000) -- (598.0000,376.0000) -- 
(601.0000,381.0000) --
      (598.0000,386.0000) -- cycle;
    \path[draw] (407.0000,441.0000) -- (473.0000,441.0000) -- 
(473.0000,390.5000);
    \path[fill] (473.0000,387.5000) -- (468.0000,394.5000+5) -- 
(473.0000,391.5000+5) --
      (478.0000,394.5000+5) -- cycle;
  \end{scope}
\end{tikzpicture}
\end{center}
\caption{Bell Circuit: entanglement of two elements of the canonical basis 
$\ket0$ and $\ket1$}
\end{figure}

To this end we adopt the following notations:
\begin{notation}
We use $H_2:= \frac{1}{\sqrt{2}} \begin{pmatrix}1 & 1 \\ 1 & -1
\end{pmatrix}$ is the Hadamard matrix and $$H_{2^n}:=\underbrace{H_2\otimes 
\dots \otimes H_2}_{n\text{ times}}$$ is its $2^n$-dimensional
generalisation, i.e., the $2^n$-dimensional Walsh matrix.  We use the
symbols $\sigma_x ,\sigma_y$ and $\sigma_z$ to denote Pauli's matrices
 $$\sigma_x =\begin{pmatrix}
             0&1\\
             1&0
            \end{pmatrix},\qquad
     \sigma_y=\begin{pmatrix}
             0&-i\\
             i&0
            \end{pmatrix}, \qquad
       \sigma_z=\begin{pmatrix}
             1&0\\
             0&-1
            \end{pmatrix}.$$
We finally consider the orthogonal projectors
$$L := \left(
\begin{array}{cc}
1 & 0  \\
0 & 0 
\end{array}
\right), \qquad
R := \left(
\begin{array}{cc}
0 & 0  \\
0 & 1 
\end{array}
\right).$$
\end{notation}
In view of above notation, we remark that the \textsc{cnot} gate
satisfies the equality
\begin{equation*}\label{c2}
 \text{\textsc{cnot}}:=\begin{pmatrix}
                1&0&0&0\\
                0&1&0&0\\
                0&0&0&1\\
                0&0&1&0\\
                \end{pmatrix}=L\otimes I_2+R\otimes\sigma_x 
\end{equation*}
while the columns of the matrix
\begin{equation*}\label{b2}
 B_4:=\frac{1}{\sqrt{2}}\begin{pmatrix}
                1&0&1&0\\
                0&1&0&1\\
                0&1&0&-1\\
                1&0&-1&0\\
                \end{pmatrix}=\text{\textsc{cnot}} (H_2\otimes I_2)
\end{equation*}
are the coordinate vectors of the Bell states in the standard base. We extend 
the above definitions of \textsc{cnot} and of $B_2$ to an arbitrary number of 
qubits as follows
\begin{figure}
\begin{center}
\begin{tabular}{ccc}
\includegraphics[width=0.3\linewidth]{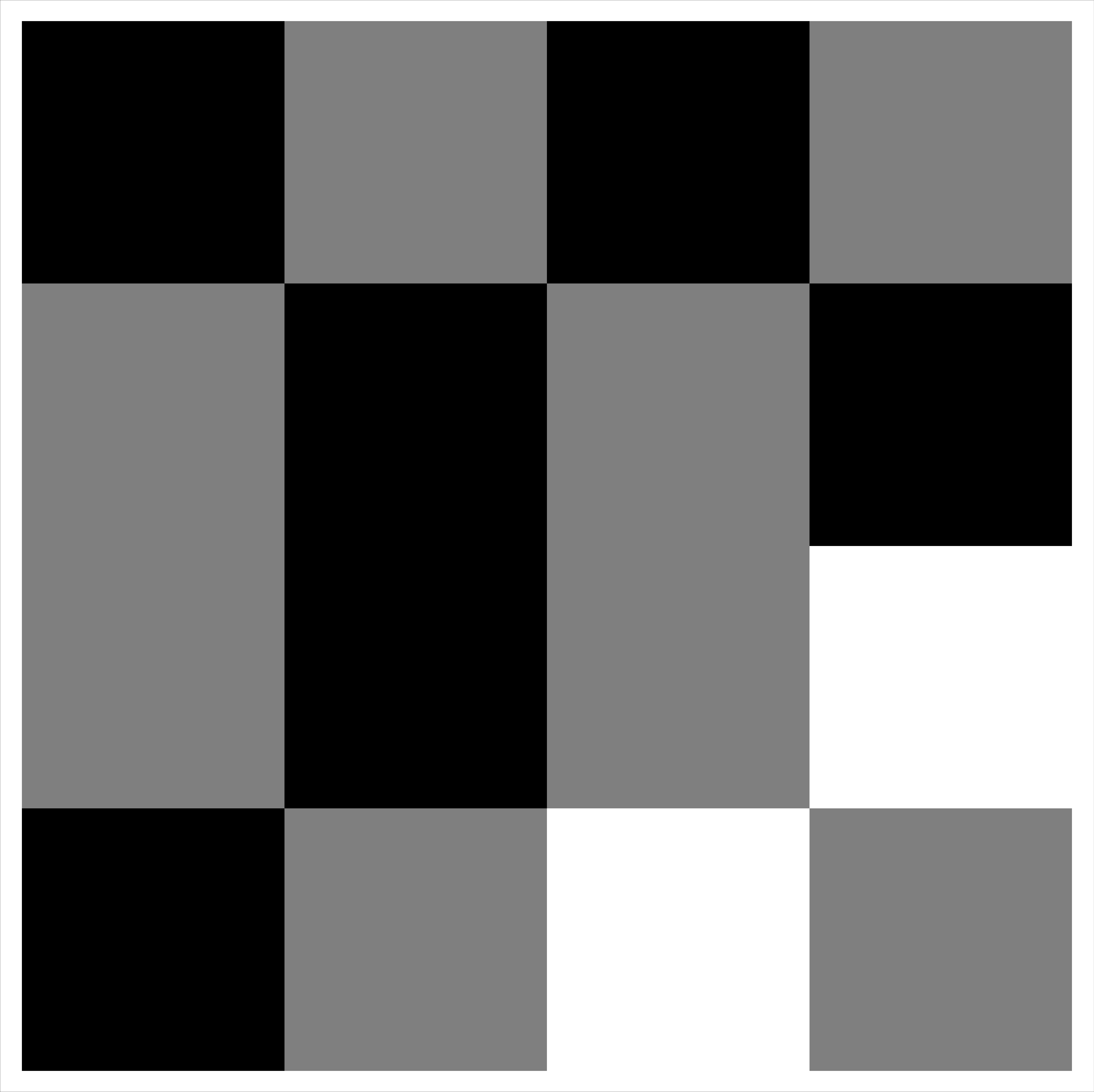} &
 \includegraphics[width=0.3\linewidth]{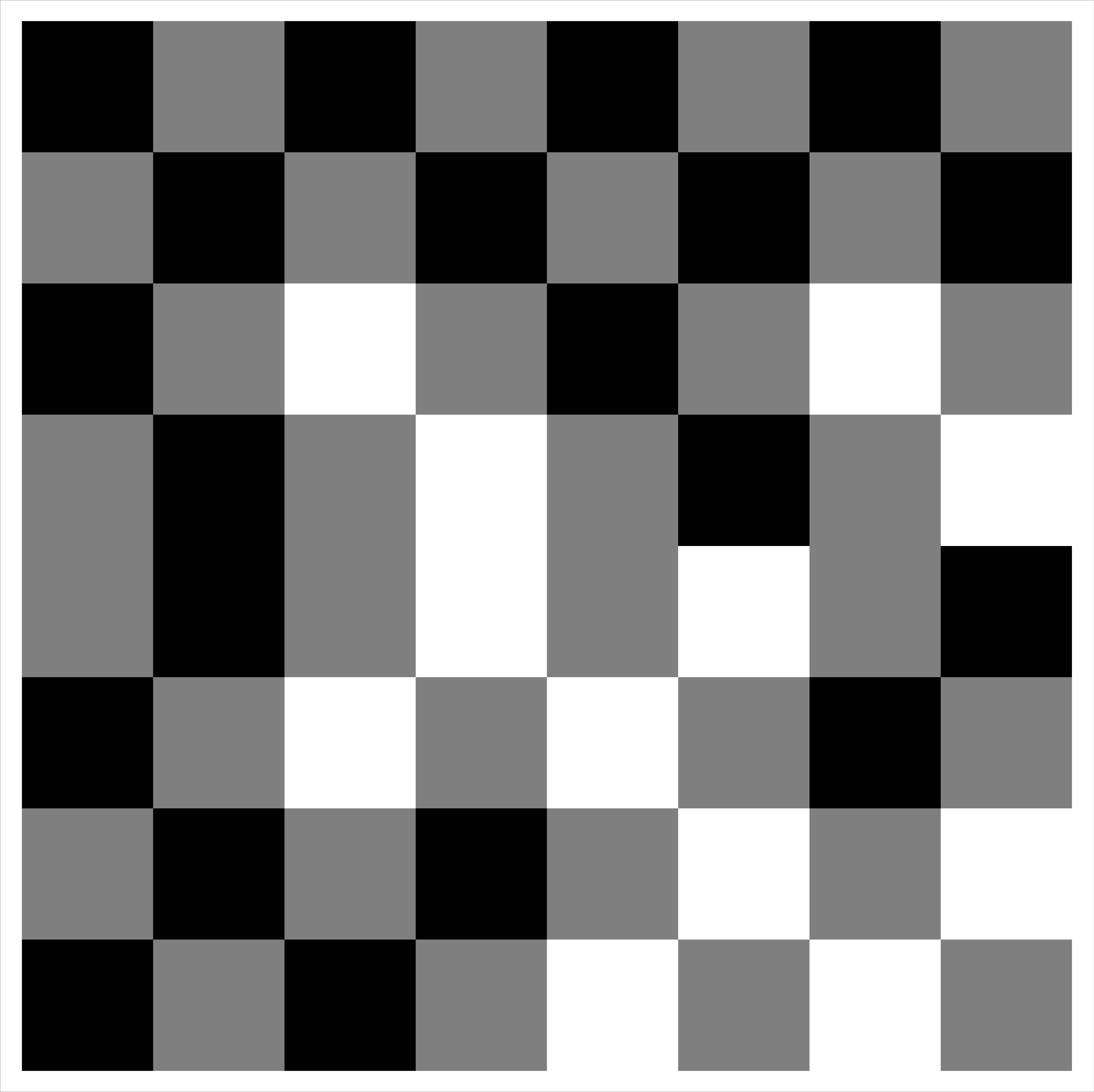} &
 \includegraphics[width=0.3\linewidth]{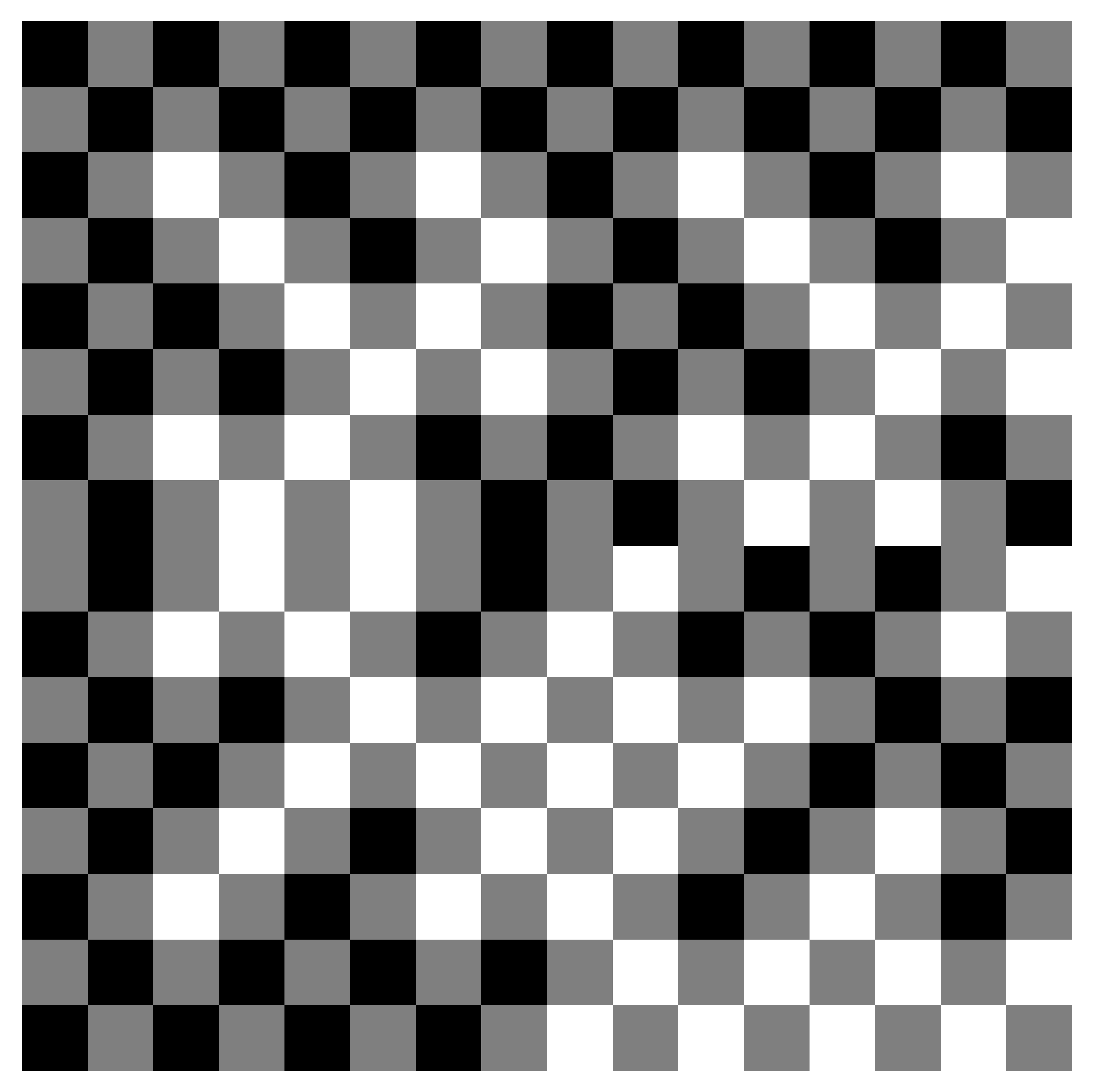} \\
 \includegraphics[width=0.3\linewidth]{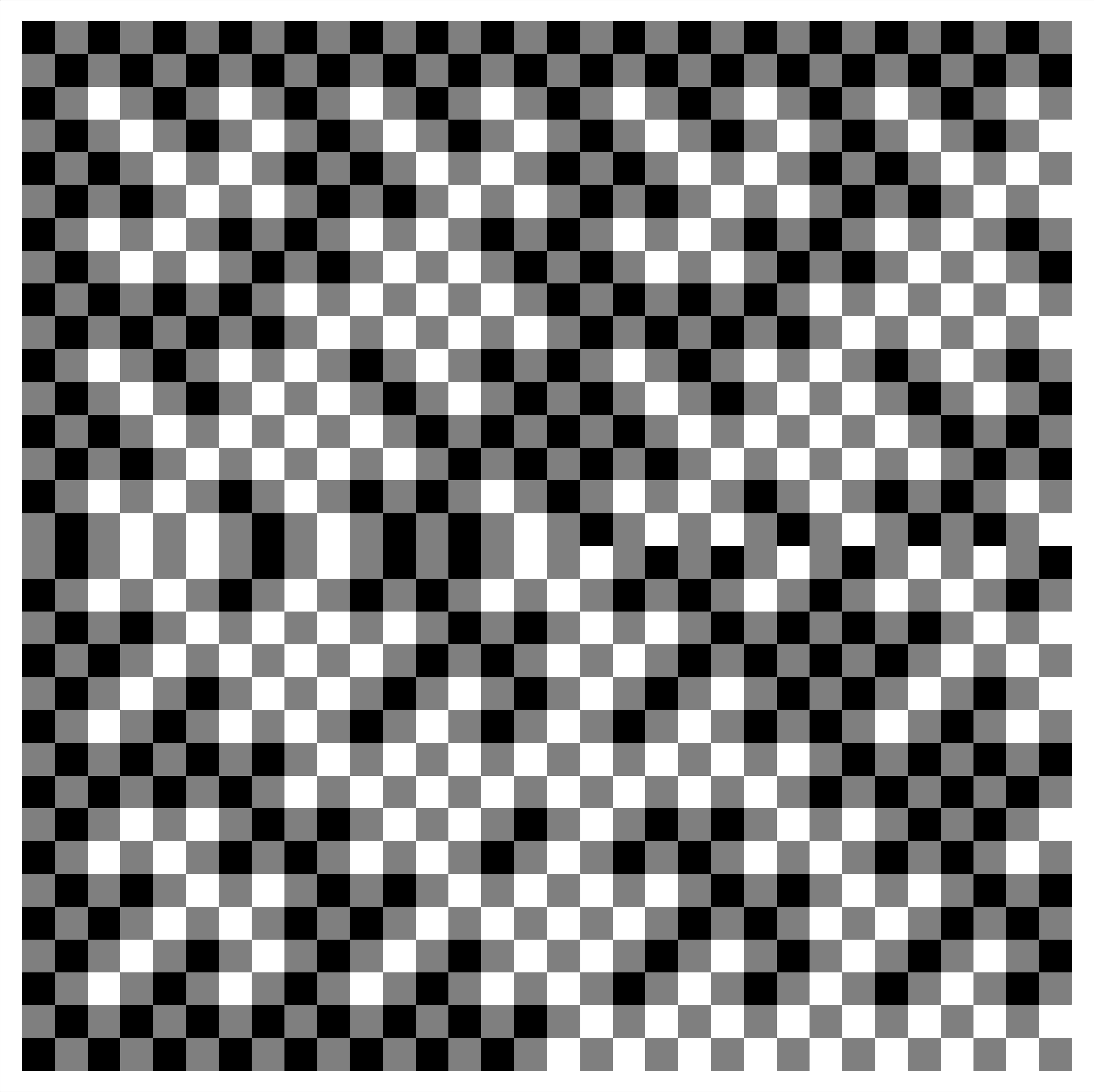} &
 \includegraphics[width=0.3\linewidth]{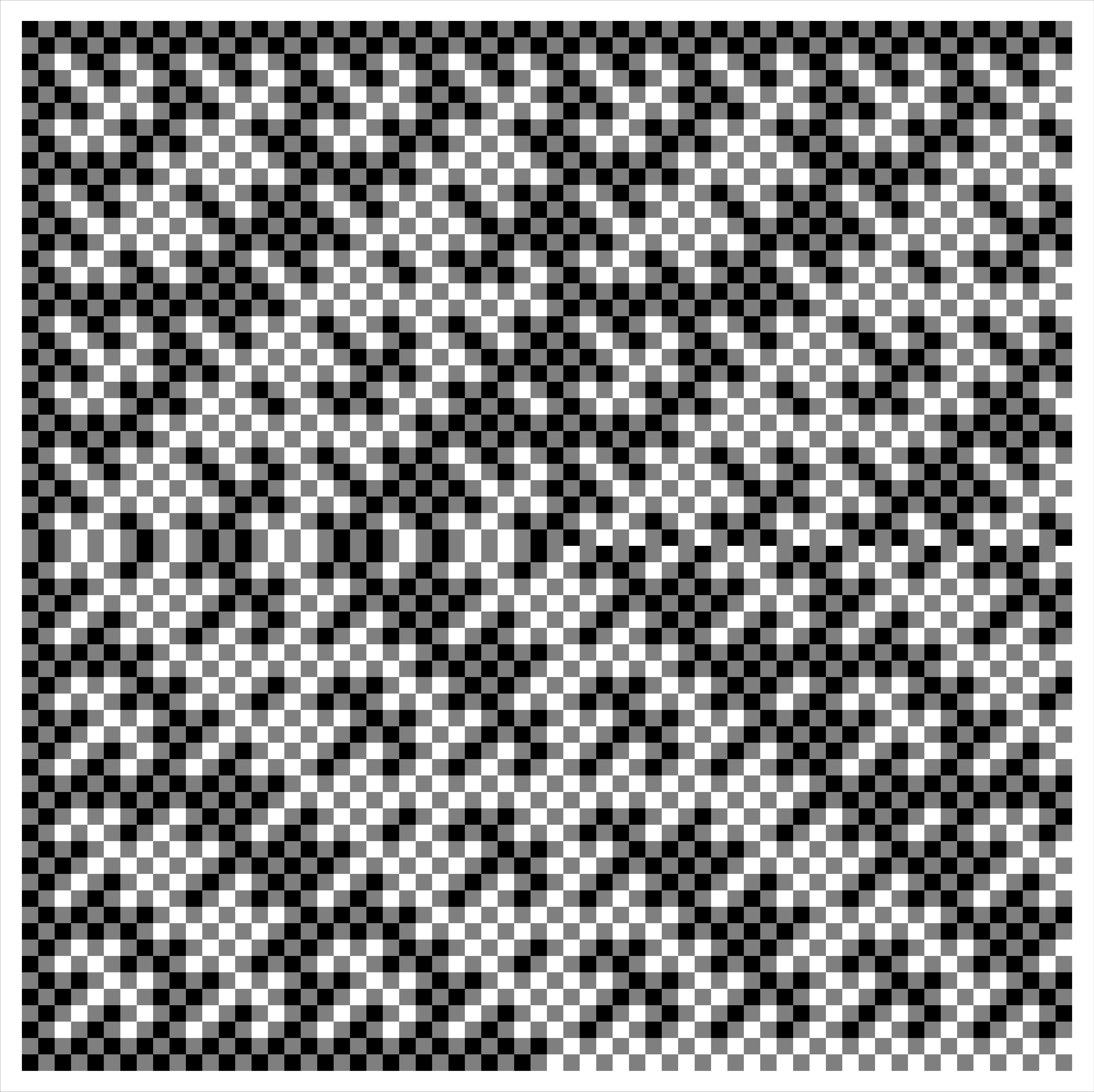} &
 \includegraphics[width=0.3\linewidth]{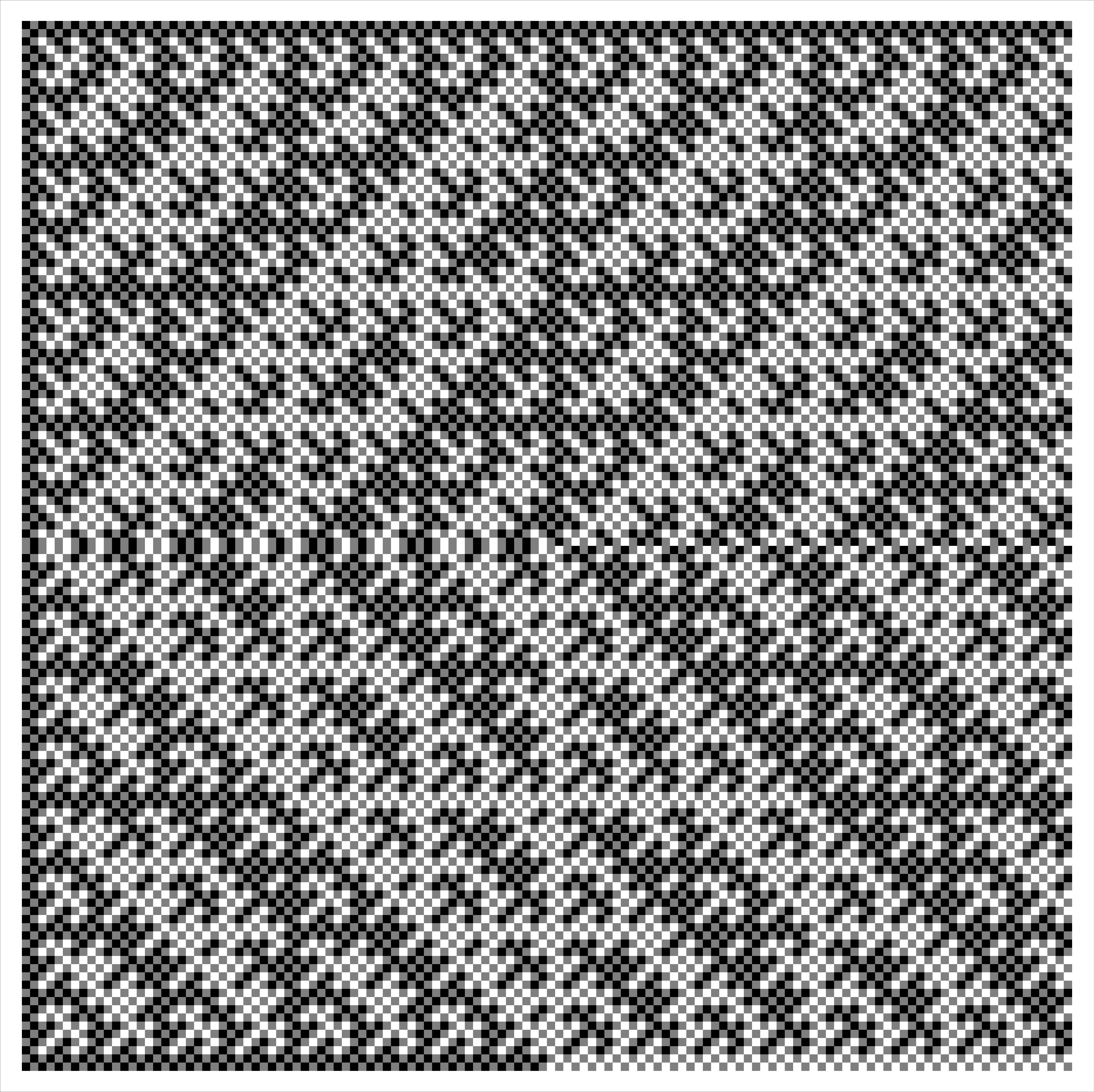} \\ 
\end{tabular}
\end{center}
\caption{We show in this picture Bell matrices $B_{2^n}$ for
  $n=2,\dots , 7$.  Entries $0$ are shown in grey color, entries $+1$
  by black color and entries $-1$ by white color.}\label{hadamards}
\end{figure}
\begin{definition}\label{defcnot}
For $n\geq 2$ we set
\begin{equation}\label{cn}
\text{\textsc{cnot}}_{2^n} :=  L\otimes I_{2^{n-1}}  + R\otimes  
\underbrace{\sigma_x \otimes \dots \otimes \sigma_x}_{n-1\text{ times}}
 \end{equation}
and \begin{equation}\label{bn}
 B_{2^n} := \text{\textsc{cnot}}_{2^n}(H_{2^{n-1}}\otimes I_2).
\end{equation}
We define \emph{$2^n$-dimensional Bell state} any state
$$|b_k\rangle:=B_{2^n}|k\rangle$$
where $ k=0,\dots,2^{n}-1$ and $|k\rangle$ is the $k$-th element of the standard 
base of $\CC^{2^n}$. 
\end{definition}
In what follows we show that the $2^n$-dimensional Bell states are
maximally entangled with respect to MW measure. We introduce the
matrix
\begin{equation}\label{Ldef}
L_{2^n}:=B_{2^n}^\dagger  M_{2^n}B_{2^n}, 
\end{equation}
whose relevance in our investigation is motivated by the following 
\begin{lemma}\label{mot}
If $|\langle \phi |L_{2^n}| \bar \phi\rangle|=1$ and if $|\psi\rangle= B_{2^n} 
|\phi\rangle$ then $|\psi\rangle$ is maximally entangled with respect to the MW 
measure. 

In particular, if $|\langle k | L_{2^n}|\bar k\rangle |=1$, where
$|k\rangle$ is the $k$-th element of the standard base, then the
$k$-th Bell state is maximally entangled with respect to the MW
measure.
\end{lemma}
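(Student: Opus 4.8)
The plan is to deduce the lemma directly from Proposition~\ref{pmax}, by showing that $\langle\phi|L_{2^n}|\bar\phi\rangle$ is nothing but $\FF(|\psi\rangle)$ when $|\psi\rangle=B_{2^n}|\phi\rangle$. The crucial observation is that $B_{2^n}$ is a \emph{real} matrix: by Definition~\ref{defcnot} it is assembled from $L$, $R$, $I_2$, $\sigma_x$ and $H_2$, all of which have real entries, and tensor products, sums and products of real matrices are real. Equivalently, $\overline{B_{2^n}}=B_{2^n}$, i.e.\ $B_{2^n}$ commutes with the conjugation operator $K_{2^n}$.

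Granting this, I would compute, for $|\psi\rangle=B_{2^n}|\phi\rangle$, using $|\bar\psi\rangle=K_{2^n}B_{2^n}|\phi\rangle=\overline{B_{2^n}}\,|\bar\phi\rangle$ together with $\langle\psi|=\langle\phi|B_{2^n}^\dagger$,
$$\FF(|\psi\rangle)=\langle\psi|M_{2^n}|\bar\psi\rangle=\langle\phi|B_{2^n}^\dagger M_{2^n}\overline{B_{2^n}}|\bar\phi\rangle=\langle\phi|B_{2^n}^\dagger M_{2^n}B_{2^n}|\bar\phi\rangle=\langle\phi|L_{2^n}|\bar\phi\rangle,$$
where the last equality is the definition~\eqref{Ldef} of $L_{2^n}$ and the second-to-last uses the reality of $B_{2^n}$. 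Hence the hypothesis $|\langle\phi|L_{2^n}|\bar\phi\rangle|=1$ is exactly $|\FF(|\psi\rangle)|=1$, and Proposition~\ref{pmax} gives at once that $|\psi\rangle$ is maximally entangled with respect to the MW measure.

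The \emph{in particular} clause then follows by specialising $|\phi\rangle=|k\rangle$: the standard basis vectors have $0/1$ coordinates, so $|\bar k\rangle=|k\rangle$ and $\langle k|L_{2^n}|\bar k\rangle=\langle k|L_{2^n}|k\rangle$; thus the condition $|\langle k|L_{2^n}|\bar k\rangle|=1$ is precisely the $|\phi\rangle=|k\rangle$ instance of the statement just proved, whence $|b_k\rangle=B_{2^n}|k\rangle$ is maximally entangled.

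I do not expect any genuine obstacle in this argument: the one point to be checked with care is the reality of $B_{2^n}$ (which could be recorded as a small preliminary remark, perhaps alongside the analogous elementary fact that $M_{2^n}$ has entries in $\{0,\pm1\}$), after which the lemma is essentially a one-line corollary of Proposition~\ref{pmax}. The real work — verifying that the diagonal quantities $\langle k|L_{2^n}|\bar k\rangle$ actually do have modulus $1$ — is a separate matter, to be handled by the explicit computation of $L_{2^n}$ that follows this lemma.
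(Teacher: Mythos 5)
Your proposal is correct and follows essentially the same route as the paper: both reduce the claim to Proposition~\ref{pmax} via the identity $\langle\phi|L_{2^n}|\bar\phi\rangle=\FF(B_{2^n}|\phi\rangle)$. The only difference is that you make explicit the reality of $B_{2^n}$ (needed to pass $\overline{B_{2^n}|\phi\rangle}=B_{2^n}|\bar\phi\rangle$), a point the paper's proof uses silently; that is a worthwhile clarification but not a different argument.
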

\begin{proof}
By the definition of $L_{2^n}$ and by the assumption $|\psi\rangle=
B_{2^n} |\phi\rangle$ one has
\begin{align*}
|\langle \phi |L_{2^n}| \bar \phi\rangle|&=|\langle \phi | B_{2^n}^\dagger  
M_{2^n}B_{2^n} | \bar \phi\rangle|\\
&=|\langle B_{2^n} \phi | M_{2^n} | \overline{(B_{2^n}|\phi)}\rangle|=|\langle 
\psi | M_{2^n} | \bar \psi)\rangle|=|\FF(|\psi\rangle)|.
 \end{align*}
The first part of the claim hence follows by Proposition \ref{pmax}.

The second part of the claim readily follows by applying above
reasoning to $|\phi\rangle=|k\rangle$ and by the definition of
$2^n$-dimensional Bell state.
\end{proof}
\begin{remark}
There exist states $\phi$ which not satisfy $|\langle \phi |L_{2^n}|
\bar \phi\rangle|=1$ and such that $B_{2^n} |\phi\rangle$ is maximally
entangled, an example of this phenomenon is given by the state
$\phi=B^{-1}_{2^n}\ket{GHZ_n}$.
\end{remark}

Next result gives a closed formula for $L_{2^n}$ and relates its
diagonal elements to the \emph{Thue-Morse sequence}, that is the
binary sequence $(\tau_i)$ defined by the recursive relation
  \begin{align*}
   &\tau_1:=0\\
   &\tau_{2n}:=1-\tau_n\\
   &\tau_{2n-1}:=\tau_n  \end{align*}
for all positive integers $n$. We notice that for all $n\geq 1$
\begin{equation}\label{thuechar}
\tau_{2^n+i}=1-\tau_{i}\quad \text{for all $i=1,\dots,2^n$}.
\end{equation}
\begin{remark}
Equality \eqref{thuechar} characterises the Thue-Morse sequence via
bitwise negation, indeed it states that every initial block of length
$2^n$, i.e, $\tau_1,\dots,\tau_{2^n}$, is followed by a block of equal
length that is its bitwise negation, i.e.,
$\tau_{2^n+1}=1-\tau_{1},\dots,\tau_{2^{n+1}}=1-\tau_{2^n}$.  This can
be proved by an inductive argument, indeed the case $n=1$ follows by a
direct computation and, assuming \eqref{thuechar} as inductive
hypothesis, one readily gets the inductive step
\begin{equation*}
\tau_{2^{n+1}+i}=\begin{cases}
\tau_{2(2^{n}+i/2)}=1-\tau_{2^{n}+i/2}=1-\tau_{i/2}=1-\tau_i &\quad\text{if $i$ 
is even;}\\
\tau_{2(2^{n}+i/2)}=\tau_{2^{n}+(i+1)/2}=1-\tau_{(i+1)/2}=1-\tau_i 
&\quad\text{if $i$ is odd.}
                 \end{cases}
\end{equation*}
\end{remark}

\begin{lemma}\label{lL}For all $n\geq 2$
 \begin{equation}
 \label{LT1}
L_{2^n}=- \underbrace{\sigma_z \otimes \dots \otimes \sigma_z}_{n\text{ times}}. 
 \end{equation}
Moreover $L_{2^n}$ is a diagonal matrix whose diagonal elements
$L_{2^n,i}$, ${i=1,\dots, 2^n}$, satisfy
\begin{equation}\label{thue}
L_{2^n,i}=2 \tau_i-1, \qquad \text{for all $n=1,\dots,2^n$},
\end{equation}
where $(\tau_i)$ is the Thue-Morse sequence.
\end{lemma}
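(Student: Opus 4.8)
The plan is to prove the identity \eqref{LT1} first and then derive \eqref{thue} as an easy corollary. For \eqref{LT1}, I would start from the definition $L_{2^n}=B_{2^n}^\dagger M_{2^n}B_{2^n}$ together with $B_{2^n}=\text{\textsc{cnot}}_{2^n}(H_{2^{n-1}}\otimes I_2)$, so that
$$
L_{2^n}=(H_{2^{n-1}}\otimes I_2)^\dagger\,\text{\textsc{cnot}}_{2^n}^\dagger\, M_{2^n}\, \text{\textsc{cnot}}_{2^n}\,(H_{2^{n-1}}\otimes I_2).
$$
Since $H_2$ is real symmetric and $\text{\textsc{cnot}}_{2^n}$ is real symmetric (it is a permutation matrix equal to its own inverse, being an involution), the daggers are just transposes and everything is real. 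The first key step is to compute $C:=\text{\textsc{cnot}}_{2^n}^\dagger M_{2^n}\,\text{\textsc{cnot}}_{2^n}$ using the block form $\text{\textsc{cnot}}_{2^n}=L\otimes I_{2^{n-1}}+R\otimes X$, where I write $X:=\sigma_x^{\otimes(n-1)}$ for brevity, and the block form $M_{2^n}=\sigma_y\otimes I_{2^{n-2}}\otimes\sigma_y=\sigma_y\otimes M'$ with $M':=I_{2^{n-2}}\otimes\sigma_y$. Multiplying out and using $L^2=L$, $R^2=R$, $LR=RL=0$, $L\sigma_y L=R\sigma_y R=0$, $L\sigma_y R=-iR\cdot(\text{sign})$... more carefully: $\sigma_y=\begin{pmatrix}0&-i\\ i&0\end{pmatrix}$ so $L\sigma_y R$ has the single entry $-i$ in position $(1,2)$ and $R\sigma_y L$ has $i$ in position $(2,1)$; hence the cross terms survive and the $LL$, $RR$ terms die. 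One finds
$$
C=L\sigma_y R\otimes(M' X)+R\sigma_y L\otimes(X M')=\begin{pmatrix}0&-i\,M'X\\ i\,X M'&0\end{pmatrix}.
$$
Now $X=\sigma_x^{\otimes(n-1)}$ and $M'=I_{2^{n-2}}\otimes\sigma_y$, and since $\sigma_x\sigma_y=i\sigma_z$ while $\sigma_y\sigma_x=-i\sigma_z$, one gets $XM'=\sigma_x^{\otimes(n-2)}\otimes(\sigma_x\sigma_y)=i\,\sigma_x^{\otimes(n-2)}\otimes\sigma_z$ and $M'X=-i\,\sigma_x^{\otimes(n-2)}\otimes\sigma_z=-XM'$. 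Therefore $C=\sigma_y\otimes(-i\,M'X)$... let me instead just record $C=\begin{pmatrix}0&-i M'X\\ i XM'&0\end{pmatrix}$ with $XM'=-M'X=:P$, giving $C=\sigma_y\otimes(iP)$? No: the $(1,2)$ block is $-iM'X=iP$ and the $(2,1)$ block is $iXM'=iP$, so $C=\sigma_x\otimes(iP)=i\,\sigma_x\otimes\sigma_x^{\otimes(n-2)}\otimes\sigma_z=i\,\sigma_x^{\otimes(n-1)}\otimes\sigma_z$. Wait, I should double-check the sign/factor, but the upshot is that $C$ is (up to a real $\pm$ sign) $\sigma_x^{\otimes(n-1)}\otimes\sigma_z$; any stray $i$ must cancel since $C$ is a real matrix, so in fact $C=\pm\,\sigma_x^{\otimes(n-1)}\otimes\sigma_z$ and a direct check in the case $n=2$ (where $C=\text{\textsc{cnot}}^\dagger M_4\,\text{\textsc{cnot}}$, an explicit $4\times 4$ computation against $-\sigma_x\otimes\sigma_z$) fixes the sign.

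The second key step is to conjugate $C$ by $H_{2^{n-1}}\otimes I_2$. Using the well-known identity $H_2\sigma_x H_2=\sigma_z$ (and hence $H_2\sigma_z H_2=\sigma_x$), together with the tensor-factorisation $H_{2^{n-1}}=H_2^{\otimes(n-1)}$, I get $H_{2^{n-1}}\bigl(\sigma_x^{\otimes(n-1)}\bigr)H_{2^{n-1}}=\sigma_z^{\otimes(n-1)}$, and the trailing $I_2$ leaves the last $\sigma_z$ untouched. Hence
$$
L_{2^n}=(H_{2^{n-1}}\otimes I_2)\,C\,(H_{2^{n-1}}\otimes I_2)=\pm\,\sigma_z^{\otimes(n-1)}\otimes\sigma_z=\pm\,\sigma_z^{\otimes n},
$$
and the sign is $-$, as pinned down by the $n=2$ base case (or simply by noting $\langle 0|L_{2^n}|0\rangle=\mathcal F(|b_0\rangle)$ and computing this one number directly, which must be $-1$ since $\tau_1=0$). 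This establishes \eqref{LT1}.

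For the second assertion, $L_{2^n}=-\sigma_z^{\otimes n}$ is manifestly diagonal, and its $i$-th diagonal entry (indexing $i=1,\dots,2^n$, i.e. $i-1$ has binary digits $b_{n-1}\cdots b_0$) equals $-\prod_{j}(-1)^{b_j}=-(-1)^{s(i-1)}$ where $s(\cdot)$ is the binary digit sum. Since the Thue–Morse sequence satisfies $\tau_i=s(i-1)\bmod 2$ — which one can confirm from the recursion given, or more cleanly from the characterisation \eqref{thuechar} by induction: appending a leading bit $1$ flips the parity of the digit sum — we get $-(-1)^{s(i-1)}=2\tau_i-1$, which is \eqref{thue}. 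Alternatively, and perhaps more in keeping with the paper's flavor, I would prove \eqref{thue} directly by induction on $n$ using the block structure: $L_{2^{n+1}}=-\sigma_z\otimes\sigma_z^{\otimes n}=-\sigma_z\otimes(-L_{2^n})=\sigma_z\otimes L_{2^n}=\begin{pmatrix}L_{2^n}&0\\0&-L_{2^n}\end{pmatrix}$, so the first $2^n$ diagonal entries of $L_{2^{n+1}}$ repeat those of $L_{2^n}$ and the last $2^n$ are their negatives; this is exactly the recursion $\tau_{2^n+i}=1-\tau_i$ of \eqref{thuechar} after the affine change of variables $x\mapsto 2x-1$ (negation of $2\tau-1$ corresponds to $\tau\mapsto 1-\tau$), completing the induction from the base case $L_4=-\sigma_z\otimes\sigma_z=\mathrm{diag}(-1,1,1,-1)=(2\tau_i-1)_{i=1}^4$.

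The main obstacle is bookkeeping the signs and factors of $i$ in the first step — the conjugation $\text{\textsc{cnot}}_{2^n}^\dagger M_{2^n}\text{\textsc{cnot}}_{2^n}$ mixes the $\sigma_y$'s with the string of $\sigma_x$'s and the non-commutativity $\sigma_x\sigma_y=-\sigma_y\sigma_x=i\sigma_z$ is where errors creep in; the safe route is to carry the computation symbolically knowing a priori that the answer is a real matrix (so all $i$'s must cancel), and then fix the single overall sign by checking $n=2$ explicitly or by evaluating one diagonal entry against $\tau_1=0$.
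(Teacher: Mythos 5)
Your proposal is correct and follows essentially the same route as the paper: both expand $B_{2^n}$ through the $L/R$ block structure of $\text{\textsc{cnot}}_{2^n}$, kill the diagonal blocks via $L\sigma_y L=R\sigma_y R=0$, reduce the surviving cross terms with Pauli algebra and $H_2\sigma_x H_2=\sigma_z$ to reach $-\sigma_z^{\otimes n}$, and then obtain the Thue--Morse identity from the block recursion $L_{2^{n+1}}=\sigma_z\otimes L_{2^n}$ together with \eqref{thuechar}. The arithmetic slip in your line $C=\sigma_x\otimes(iP)$ is harmless (since $P=i\,\sigma_x^{\otimes(n-2)}\otimes\sigma_z$, one gets $iP=-\sigma_x^{\otimes(n-2)}\otimes\sigma_z$ and hence $C=-\sigma_x^{\otimes(n-1)}\otimes\sigma_z$ with no stray $i$), and in any case your reality argument plus the explicit $n=2$ check legitimately pins the sign.
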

\begin{proof}
In order to prove (\ref{LT1}), we recall the definition of $L_{2^n}$ in Equation 
\eqref{Ldef}
 \begin{equation}\label{n2}
  \begin{split}
  L_{2^n}=&B^\dagger_{2^n} M_{2^n} B_{2^n}\\ =&(LH_2)^\dagger \sigma_y
  RH_2 \otimes \underbrace{(H_2^\dagger \sigma_x
    H_2)\otimes\dots\otimes (H_2^\dagger \sigma_x H_2)}_{n-1\text{
      times}}\otimes \sigma_y\sigma_x +\\ &+(RH_2)^\dagger \sigma_y
  LH_2\otimes\underbrace{((\sigma_x H_2)^\dagger
    H_2)\otimes\dots\otimes ((\sigma_x H_2)^\dagger H_2)}_{n-1 \text{
      times}}\otimes \sigma_x ^\dagger \sigma_y
  \end{split}  
\end{equation}
the second equality is obtained by applying Definition \ref{defcnot},
Equations \eqref{cn} and \eqref{bn} where $B_{2^n}$ is given in terms
of $\text{\textsc{cnot}}_{2^n}$, namely
$$B_{2^n}= (LH_2)\otimes\underbrace{H_2\otimes \dots \otimes
  H_2}_{(n-1)\text{ times}} + (RH_2)\otimes \underbrace{\sigma_x H_2
  \otimes \dots \otimes \sigma_x H_2}_{(n-1)\text{ times}}$$ 
and by applying $L^\dagger \sigma_y L=R^\dagger \sigma_y R=0$.  By a
direct computation
$$(RH_2)^\dagger \sigma_y LH_2= ((LH_2)^\dagger \sigma_y RH_2)^\dagger 
=-\frac{i}{2}\begin{pmatrix}
                                    -1&1\\
                                    -1&1\\
\end{pmatrix} \quad\text{and}\quad \sigma_x ^\dagger \sigma_y=(\sigma_y\sigma_x 
)^\dagger =i\sigma_z.$$
By plugging above relations in (\ref{n2}) we obtain the first part of
the claim, indeed
 \begin{align*}
  L_{2^n}=&\frac{i}{2}\begin{pmatrix}
                       -1&1\\
                       -1&1\\
 \end{pmatrix}\otimes\underbrace{\sigma_z\otimes\dots\otimes\sigma_z}_{n-2 \text{ 
times}} \otimes (-i \sigma_z )-
\frac{i}{2}\begin{pmatrix}
                       -1&-1\\
                       1&1\\
            \end{pmatrix}\otimes\underbrace{\sigma_z\otimes\dots\otimes\sigma_z}_{n-2 \text{ 
times}}\otimes i \sigma_z \\
                      =&\frac{1}{2}\begin{pmatrix}
                       -1&1\\
                       -1&1\\
                      \end{pmatrix}\otimes\underbrace{\sigma_z\otimes\dots\otimes\sigma_z}_{n-1 \text{ 
times}}+
\frac{1}{2}\begin{pmatrix}
                       -1&-1\\
                       1&1\\
                   \end{pmatrix}\otimes\underbrace{\sigma_z\otimes\dots\otimes\sigma_z}_{n-1 \text{ 
times}}\\
                      =&-\underbrace{\sigma_z\otimes\dots\otimes\sigma_z}_{n 
\text{ times}}.
                      \end{align*}                     
Now, above equality implies
\begin{equation}\label{LTT}
 L_{2^n}=\sigma_z \otimes L_{2^{n-1}}
\end{equation}
and, by an inductive argument, that $L_{2^n}$ is a diagonal matrix.

Finally we prove \eqref{thue} by induction on $n$.  The base of
induction, i.e. the case $n=1$, readily follows by $L_2=\sigma_z$ and
by the definition of $\tau_1$ and of $\tau_2$.  Now we prove the
inductive step, i.e., we assume \eqref{thue} as inductive hypothesis
and we show
\begin{equation}\label{thue2}
  L_{2^{n+1},i}=2\tau_i-1, \qquad \text{for all $i=1,\dots,2^{n+1}$},
 \end{equation}
 By \eqref{LTT} we have $L_{2^{n+1}}=\sigma_z \otimes L_{2^{n}}$ and, 
consequently,
$$L_{2^{n+1},i}=\begin{cases}
             L_{2^{n},i}\quad &\text{if } i\leq 2^{n}\\
             -L_{2^{n},i-2^{n}}\quad &\text{otherwise}.
            \end{cases}.$$
This, together with \eqref{thuechar}, implies \eqref{thue2}, indeed we
have
\begin{align*}
&L_{2^{n+1},i}= L_{2^{n},i}=2 \tau_i-1\\
&L_{2^{n+1},2^n+i}= -L_{2^{n},i}=1-2 \tau_i=2 \tau_{2^n+i}-1
 \end{align*}
for all $i=1,\dots,2^n$ and this completes the proof.
\end{proof}

\begin{theorem}\label{thm}
The $2^n$-dimensional Bell states are maximally entangled with respect
to MW measure.
\end{theorem}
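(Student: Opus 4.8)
The plan is to reduce the theorem to the two criteria already in hand: Lemma~\ref{mot} says that $|b_k\rangle = B_{2^n}|k\rangle$ is maximally entangled with respect to the MW measure as soon as $|\langle k | L_{2^n} | \bar k\rangle| = 1$, and Lemma~\ref{lL} gives an explicit closed form for $L_{2^n}$. So the whole argument amounts to checking, for every element $|k\rangle$ of the standard basis of $\CC^{2^n}$ (with $k = 0, \dots, 2^n - 1$ and $n \geq 2$, consistent with Definition~\ref{defcnot}), that this matrix element has modulus one, and then invoking Lemma~\ref{mot}.

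The key observation is that the standard basis vectors are real, so $|\bar k\rangle = |k\rangle$, and since $L_{2^n}$ is diagonal by Lemma~\ref{lL}, the quantity $\langle k | L_{2^n} | \bar k\rangle = \langle k | L_{2^n} | k\rangle$ is just the diagonal entry of $L_{2^n}$ sitting at the slot labelled by $|k\rangle$. By the same lemma that entry equals $2\tau_{k+1}-1$, where $(\tau_i)$ is the Thue--Morse sequence; here the only mild bookkeeping point is the index shift, since the basis is labelled $0,\dots,2^n-1$ while the diagonal entries $L_{2^n,i}$ are labelled $i = 1,\dots,2^n$, so $|k\rangle$ corresponds to $L_{2^n,k+1}$. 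Because $(\tau_i)$ takes values in $\{0,1\}$, we get $|\langle k | L_{2^n} | \bar k\rangle| = |2\tau_{k+1}-1| = 1$ for every $k$.

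From here Lemma~\ref{mot} (second part) applies verbatim to each $k$ and yields that every $2^n$-dimensional Bell state $|b_k\rangle$ is maximally entangled with respect to the MW measure, which is the assertion of the theorem. I expect no real obstacle: all the substantive work has already been carried out --- the conjugation identity $\langle\phi|L_{2^n}|\bar\phi\rangle = \FF(B_{2^n}|\phi\rangle)$ together with Proposition~\ref{pmax} in Lemma~\ref{mot}, and the computation $L_{2^n} = -\sigma_z^{\otimes n}$ with its diagonal/Thue--Morse description in Lemma~\ref{lL} --- so this final statement is essentially a one-line corollary, the only thing to state carefully being that $|k\rangle$ being real forces $|\bar k\rangle = |k\rangle$ so that a diagonal $L_{2^n}$ automatically gives $\pm1$ on the diagonal.
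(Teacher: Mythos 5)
Your proposal is correct and follows exactly the paper's own argument: Lemma~\ref{lL} gives that $L_{2^n}$ is diagonal with entries $\pm 1$, hence $|\langle k|L_{2^n}|\bar k\rangle|=1$ for every $k$, and Lemma~\ref{mot} then yields maximal entanglement of each Bell state. The extra bookkeeping you supply (reality of $|k\rangle$ and the index shift) is just a more careful spelling-out of the same one-line deduction.
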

\begin{proof}
By Lemma \ref{lL}, $L_{2^n}$ is a diagonal matrix with $1$ or $-1$ as
diagonal elements then $ |\langle k|L_{2^n}|\bar k\rangle|=1$ for all
$k=0,\dots,2^n-1$ and this, together with Lemma \ref{mot}, implies the
claim.
\end{proof}

\subsection{Some remarks on an entanglement criterion}\label{s31} 
Lemma~\ref{mot} provides a maximal entanglement criterion that can be
rephrased as follows ``If $|\langle \phi | L_{2^n} |\bar\phi \rangle
|=1$ then $B_{2^n}| \phi\rangle$ is maximally entangled''.  Then one
may ask how is made the space of states satisfying this condition.
Lemma~\ref{lL} provides some answers to this question.  Indeed we
already used in the proof of Theorem~\ref{thm} the fact that $|\langle
k| L_{2^n}| k\rangle |=1$, if $|k\rangle$ is an element of the
canonical base.  Next result investigates this property in the larger
class of states whose coordinates in the standard base are real
valued.
\begin{proposition}\label{evil}
Let $(\tau_i)$ be the Thue-Morse sequence and let $o_i$ and $e_i$ be
the index sequences such that $\tau_{o_i}=1$ and $\tau_{e_i}=0$ for
all $i\in\NN$.  Then for all $x\in \RR^{2^n}$ with $|x|=1$, one has
$|x\tsp L_{2^n} x|=1$ if and only if either $x_{e_i}=0$ for all
$i=1,\dots,2^{n-1}$ or $x_{o_i}=0$ for all $i=1,\dots,2^{n-1}$.
\end{proposition}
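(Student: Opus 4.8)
The plan is to push everything through the diagonal description of $L_{2^n}$ supplied by Lemma~\ref{lL} and then finish with an elementary estimate on the unit sphere. By Lemma~\ref{lL}, Equation~\eqref{thue}, the matrix $L_{2^n}$ is diagonal with entries $L_{2^n,i}=2\tau_i-1$, so $L_{2^n,i}=+1$ when $\tau_i=1$ and $L_{2^n,i}=-1$ when $\tau_i=0$. The first preliminary step is to record that, among the indices $1,\dots,2^n$, exactly $2^{n-1}$ are \emph{odious} ($\tau_i=1$) and exactly $2^{n-1}$ are \emph{evil} ($\tau_i=0$): this follows by an easy induction on $n$ from \eqref{thuechar}, since the block $\tau_1,\dots,\tau_{2^n}$ is $\tau_1,\dots,\tau_{2^{n-1}}$ followed by its bitwise negation, so a count of $k$ ones in the first half forces $2^{n-1}-k$ ones in the second half; the base case $n=1$ is $\tau_1=0$, $\tau_2=1$. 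Hence the first $2^{n-1}$ terms of the sequences $(o_i)$ and $(e_i)$ all lie in $\{1,\dots,2^n\}$ and together partition it.

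The second step is the quadratic-form computation. Since $L_{2^n}$ is diagonal,
$$x\tsp L_{2^n} x=\sum_{i=1}^{2^n}(2\tau_i-1)\,x_i^2=\sum_{i=1}^{2^{n-1}}x_{o_i}^2-\sum_{i=1}^{2^{n-1}}x_{e_i}^2.$$
Writing $P:=\sum_{i=1}^{2^{n-1}}x_{o_i}^2\ge 0$ and $N:=\sum_{i=1}^{2^{n-1}}x_{e_i}^2\ge 0$, the partition property together with $|x|=1$ gives $P+N=\sum_{i=1}^{2^n}x_i^2=1$, and therefore $|x\tsp L_{2^n} x|=|P-N|$.

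The third and final step is the trivial inequality: for nonnegative reals with $P+N=1$ one has $|P-N|\le P+N=1$, with equality precisely when $PN=0$, i.e.\ when $P=0$ or $N=0$. Since $P=0$ is equivalent to $x_{o_i}=0$ for all $i=1,\dots,2^{n-1}$ and $N=0$ is equivalent to $x_{e_i}=0$ for all $i=1,\dots,2^{n-1}$, this is exactly the claimed equivalence. There is no serious obstacle here once Lemma~\ref{lL} is available: the argument is a bookkeeping step (the balanced count of odious and evil indices, needed only to make the sequences in the statement well-defined and exhaustive) followed by a one-line expansion and a one-line inequality; the only point deserving a moment's care is that balanced count.
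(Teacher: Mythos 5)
Your proof is correct and follows essentially the same route as the paper's: both reduce the claim to the diagonal form $L_{2^n,i}=2\tau_i-1$ from Lemma~\ref{lL} and then observe that, with $P+N=1$, the quadratic form has absolute value $1$ exactly when one of the two groups of coordinates vanishes (the paper phrases this by treating the cases $\sum_i L_{2^n,i}x_i^2=\pm1$ separately and subtracting $\sum_i x_i^2=1$, which is the same computation). Your preliminary check that exactly $2^{n-1}$ of the indices $1,\dots,2^n$ are odious and $2^{n-1}$ are evil is a worthwhile detail that the paper's proof uses implicitly without justification.
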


\begin{proof}
Let $x=(x_1,\dots,x_{2^n})\in \RR^{2^n}$ with $|x|=1$. Since $x$ is
real valued then $|x_i|^2=x_i^2$ for all $i=1,\ldots,2^n$ and
$\sum_{i=1}^{2^n} x^2_i=|x|^2$.  On the other hand $|x\tsp L_{2^n}
x|=|\sum_{i=1}^{2^n} \tau_i x^2_i|=1$ if and only if either
$\sum_{i=1}^{2^n} L_{2^n,i} x^2_i=1$ or $\sum_{i=1}^{2^n} L_{2^n,i}
x^2_i=-1$, where $L_{2^n,i}$ is the $i$-th diagonal element of
$L_{2^n}$.  Since $|x|=1$, the former case is equivalent to
$\sum_{i=1}^{2^n} L_{2^n,i} x^2_i=\sum_{i=1}^{2^n} x^2_i$ and, this,
together with the equality $L_{2^n,i}=2\tau_i-1$ proved in Lemma
\ref{lL}, implies
$$0=\sum_{i=1}^{2^n} (L_{2^n,i}-1) x^2_i=\sum_{i=1}^{2^n} (2\tau_i-2) 
x^2_i=-2\sum_{i=1}^{2^{n-1}}  x^2_{e_i}.$$

Above equality holds if and only if $x_{e_i}=0$ for all
$i=1,\dots,2^{n-1}$.  It follows by a similar argument that
$\sum_{i=1}^{2^n} \tau_i x^2_i=-1$ is equivalent to $x_{o_i}=0$ for
all $i=1,\dots,2^{n-1}$ and this completes the proof.
\end{proof}

\begin{remark}
The index sequences $e_i$ and $o_i$ defined in above Proposition are
called Conway's odious and evil numbers.
\end{remark}

\section{Conclusions}
We proposed a family of unitary transformations generalising the
\textsc{cnot} gate to an arbitrary number of qubits. We showed that a
circuit composed by Walsh matrix and our general \textsc{cnot} gate
yields a maximally entangled (with respect to MW measure) set of
states, that we called \emph{generalised Bell states}.  In order to
prove the validity of the method, we developed ad hoc entanglement
criteria based on the definition of a suitable antilinear operator.
The paper also contains a preliminary theoretical investigation of
such operator, which turned out to be related with the celebrated
Thue-Morse sequence.

Results in the present paper open the way to further investigations in
several directions. For instance, it could be interesting to extend
the method to general controlled unitary operations.  Also, a deeper
investigation of antilinear operators with zero expectation value on
product states could represent a step towards an algebraic
characterisation of the states with maximal MW measure. Finally it
could be interesting to better understand the intriguing relation
between states with maximal MW measure and the Thue-Morse sequence.

\bibliographystyle{plain} 
\bibliography{quantum}
\end{document}